\newtheorem{theorem}{Theorem}
\newtheorem{lemma}[theorem]{Lemma}
\newtheorem{definition}[theorem]{Definition}
\def\mkfancyprefix#1#2{%
\expandafter\def\csname fancyref#1labelprefix\endcsname{#1}%
\begingroup\def\x{\endgroup\frefformat{plain}}%
    \expandafter\x\csname fancyref#1labelprefix\endcsname
    {\MakeLowercase{#2}\fancyrefdefaultspacing##1}%
\begingroup\def\x{\endgroup\Frefformat{plain}}%
    \expandafter\x\csname fancyref#1labelprefix\endcsname
    {#2\fancyrefdefaultspacing##1}%
\begingroup\def\x{\endgroup\frefformat{vario}}%
    \expandafter\x\csname fancyref#1labelprefix\endcsname
    {\MakeLowercase{#2}\fancyrefdefaultspacing##1##3}%
\begingroup\def\x{\endgroup\Frefformat{vario}}%
    \expandafter\x\csname fancyref#1labelprefix\endcsname
    {#2\fancyrefdefaultspacing##1##3}%
}
\fancyrefchangeprefix{\fancyrefeqlabelprefix}{eqn}
\newcommand{\cref}[1]{\Fref{#1}}
\newcommand{\removelatexerror}{\let\@latex@error\@gobble}
\newcommand{\printalgoIEEE}[1]
{{\centering
\scalebox{0.97}{
\removelatexerror
\begin{tabular}{p{\columnwidth}}
\begin{algorithm}[H]
 \begin{small}
 #1
 \end{small}
\end{algorithm}
\end{tabular}
}
}
}
\newcommand{\printalgoIEEEdoublecolumn}[1]{
{\centering
\begin{table*}[!t]
\scalebox{0.97}{
\removelatexerror
\begin{tabular}{p{\textwidth}}
\begin{algorithm}[H]
 #1
\end{algorithm}
\end{tabular}
}
\end{table*}
}
}
\DeclareMathOperator{\rank}{rank}
\renewcommand{\vec}[1]{\ensuremath{\mathbf{#1}}}
\newcommand{\Mat}[1]{\ensuremath{{#1}}}
\newcommand{\LEEAOutputRx}{\ensuremath{r_\mathrm{out}}}
\newcommand{\LEEAOutputUx}{\ensuremath{u_\mathrm{out}}}
\newcommand{\LEEAOutputVx}{\ensuremath{v_\mathrm{out}}}
\newcommand{\Normelement}{\beta}
\newcommand{\mycode}[1]{\ensuremath{\mathcal{#1}}}
\newcommand{\Gab}[1]{\ensuremath{\mycode{G}[#1]}}
\newcommand{\N}{\mathbb{N}}
\newcommand{\Fq}{\mathbb{F}_q}
\newcommand{\Fqm}{\mathbb{F}_{q^m}}
\newcommand{\qdeg}{\mathrm{deg}_q}
\newcommand{\ext}{\mathrm{ext}}
\newcommand{\extinv}{\mathrm{ext}^{-1}}
\newcommand{\LH}[1]{\langle #1 \rangle}
\newcommand{\Lset}{\mathcal{L}_{q^m}}
\newcommand{\BigO}[1]{\mathcal{O}\left(#1\right)}
\newcommand{\BigOtext}[1]{\mathcal{O}(#1)}
\newcommand{\OMul}[1]{\mathcal{M}_{q^m}\left(#1\right)}
\newcommand{\OMulSkew}[1]{\mathcal{M}_{q^m}\left(#1\right)}
\newcommand{\ODiv}[1]{\mathcal{D}_{q^m}\left(#1\right)}
\newcommand{\ODivSkew}[1]{\mathcal{D}_{q^m}\left(#1\right)}
\newcommand{\OMSP}[1]{\mathcal{MSP}_{q^m}\left(#1\right)}
\newcommand{\OMPE}[1]{\mathcal{MPE}_{q^m}\left(#1\right)}
\newcommand{\OIP}[1]{\mathcal{I}_{q^m}\left(#1\right)}
\newcommand{\Lsetmaxs}{\Lset^{\leq s}}
\newcommand{\qtr}[1]{\hat{#1}}
\newcommand{\mul}{\cdot}
\newcommand{\sstar}{s^\ast}
\newcommand{\Lsmallers}{\Lset^{< s}}
\newcommand{\Lsmallerk}{\Lset^{< k}}
\newcommand{\rk}{\mathrm{rk}}
\newcommand{\MSP}[1]{\mathcal{M}_{#1}}
\newcommand{\U}{\mathcal{U}}
\newcommand{\AMSP}[1]{\mathrm{MSP}\left( #1 \right)}
\newcommand{\AMPE}[2]{\mathrm{MPE}\left( #1,#2 \right)}
\newcommand{\AIP}[1]{\mathrm{IP}\left( #1 \right)}
\newcommand{\ARI}[1]{\mathrm{RightInv}\left( #1 \right)}
\newcommand{\ADIV}[1]{\mathrm{RightDiv}\left( #1 \right)}
\newcommand{\ARDIV}[1]{\ADIV{#1}}
\newcommand{\ALDIV}[1]{\mathrm{LeftDiv}\left( #1 \right)}
\newcommand{\ALEEA}[1]{\mathrm{HalfLEEA}\left( #1 \right)}
\newcommand{\RSpace}[1]{\ker(#1)}
\newcommand{\quo}{\chi}
\newcommand{\remainder}{\varrho}
\newcommand{\qL}{\quo_\mathrm{L}}
\newcommand{\qR}{\quo_\mathrm{R}}
\newcommand{\rL}{\remainder_\mathrm{L}}
\newcommand{\rR}{\remainder_\mathrm{R}}
\newcommand{\gauss}[1]{\lfloor #1 \rfloor}
\newcommand{\IP}[1]{\mathcal{I}_{#1}}
\newcommand{\autom}{\sigma}
\newcommand{\automi}[1]{\sigma^{#1}}
\newcommand{\Aset}{\Fqm[x;\autom]} 
\newcommand{\Asetmaxs}{\Aset_{\leq s}}
\newcommand{\Asetinv}{\Fqm[x;\autom^{-1}]} 
\newcommand{\Asetinvmaxs}{\Asetinv_{\leq s}}
\newcommand{\amul}{\cdot}
\definecolor{darkred}{rgb}{0.7,0,0}
\renewcommand{\tilde}{\widetilde}
\newcommand{\smallsum}{{\textstyle\sum\nolimits}}
\renewcommand{\c}{\vec c}
\renewcommand{\r}{\vec r}
\renewcommand{\a}{\vec a}
\newcommand{\e}{\vec e}
\newcommand{\BM}{\vec B}
\newcommand{\Eexp}{^\mathrm{E}}
\newcommand{\Rexp}{^\mathrm{R}}
\newcommand{\Cexp}{^\mathrm{C}}
\newcommand{\gC}{\Gamma\Cexp}
\newcommand{\gCbar}{\overline{\gC}}
\newcommand{\gCtilde}{\tilde{\gC}}
\newcommand{\lR}{\Lambda\Rexp}
\newcommand{\lE}{\Lambda\Eexp}
\newcommand{\BE}{\BM\Eexp}
\newcommand{\BR}{\BM\Rexp}
\newcommand{\BC}{\BM\Cexp}
\newcommand{\aE}{\a\Eexp}
\newcommand{\aR}{\a\Rexp}
\newcommand{\aC}{\a\Cexp}
\newcommand{\tE}{\tau}
\newcommand{\tR}{\varrho}
\newcommand{\tC}{\gamma}
\newcommand{\mymod}{\; \mathrm{mod} \,}
\newcommand{\rout}{\LEEAOutputRx}
\newcommand{\uout}{\LEEAOutputUx}
\newcommand{\vout}{\LEEAOutputVx}
\begin{document}

\title{Sub-Quadratic Decoding of Gabidulin Codes}
\author{\IEEEauthorblockN{Sven Puchinger$^1$ and Antonia Wachter-Zeh$^2$}\\
\IEEEauthorblockA{
$^1$Institute of Communications Engineering, Ulm University, Ulm, Germany\\
$^2$Department of Computer Science,
Technion---Israel Institute of Technology, Haifa, Israel\\
Email: \emph{sven.puchinger@uni-ulm.de, antonia@cs.technion.ac.il}
}
\thanks{Sven Puchinger was supported by the German Research Foundation (DFG), grant BO~867/29-3.
Antonia Wachter-Zeh's work was supported by the European Union’s Horizon 2020 research and innovation programme under the Marie Sklodowska-Curie grant agreement No. 655109.}
}

\maketitle

\begin{abstract}
This paper shows how to decode errors and erasures with Gabidulin codes in sub-quadratic time in the code length, improving previous algorithms which had at least quadratic complexity.
The complexity reduction is achieved by accelerating operations on linearized polynomials.
In particular, we present fast algorithms for division, multi-point evaluation and interpolation of linearized polynomials and show how to efficiently compute minimal subspace polynomials.
\end{abstract}

\begin{IEEEkeywords}
Gabidulin Codes, Fast Decoding, Linearized Polynomials, Skew Polynomials
\end{IEEEkeywords}


\section{Introduction}

Rank-metric codes can be found in a wide range of applications, including network coding~\cite{silva_rank_metric_approach}, code-based cryptosystems~\cite{Loidreau2010Designing}, and distributed storage systems~\cite{SilbersteinRawatVish-ErrorResilDistributedStorage_2012}.
A rank-metric code is a set of matrices and the distance between any two codewords (i.e., matrices) is the rank of the difference of the two matrices. 
Gabidulin codes are the analog of Reed--Solomon codes in the rank metric. They are defined by evaluating linearized polynomials at linearly independent points of an extension field~$\Fqm$.

In this paper, we recall that the complexity of error and erasure decoding of Gabidulin codes is determined by the complexity of the operations multiplication, division, multi-point evaluation with linearized polynomials and the calculation of minimal subspace polynomials.
The multiplication of two linearized polynomials of degree at most $s$ is known to be in $\BigOtext{s^{1.69}}$ over~$\Fqm$~\cite{wachter2013decoding}.
However, the division of two linearized polynomials was so far believed to be in~$\BigOtext{s^2}$, compare~\cite{Gadouleau_complexity}.
We show that the reduction of linearized polynomial division to \emph{skew polynomial} multiplication in \cite{caruso2012some} implies a sub-quadratic division algorithm by generalizing the above mentioned multiplication algorithm to skew polynomials.
Finding a minimal subspace polynomial and performing a multi-point evaluation were both known to have complexity $\BigOtext{s^2}$, see~\cite{li2014transform}, and the interpolation $\BigOtext{s^3}$.
We also present fast methods for these operations.

The papers \cite{SilvaKschischang-FastEncodingDecodingGabidulin-2009} and \cite{WachterAfanSido-FastDecGabidulin_DCC} consider fast decoding strategies of Gabidulin codes over $\Fq$ and the complexity of several steps of decoding Gabidulin codes is reduced to $\BigOtext{n^3}$ operations over~$\Fq$.
We show that our algorithms improve these results when considered over $\Fq$.
Hence, to our knowledge, this paper is the first work which achieves sub-quadratic decoding complexity over~$\Fqm$.

An extended version of this paper was submitted to the Journal of Symbolic Computation \cite{puchinger2015fast}, concentrating on the fast operations and their optimality.
Here, we summarize the results of \cite{puchinger2015fast}, skipping several technical proofs, and describe the connection to the decoding problem more comprehensively.

\section{Preliminaries}\label{sec:preliminaries}

Let $q$ be a prime power, $\Fq$ be a finite field with $q$ elements and $\Fqm$ an extension extension field of $\Fq$. Since $\Fqm$ can be seen as an $m$-dimensional vector space over $\Fq$, there is a vector space isomorphism $\ext : \Fqm^n \mapsto \Fq^{m \times n}$ with inverse $\extinv$.
A subspace of $\Fqm$ is always meant with respect to $\Fq$ as the scalar field.
For $A \subseteq \Fqm$, $\LH{A}$ is the $\Fq$-span of $A$.
By $\omega$ we denote the matrix multiplication complexity exponent, e.g., $\omega \approx 2.376$ in the Coppersmith--Winograd algorithm.

\subsection{Linearized Polynomials}

A \emph{linearized polynomial} \cite{ore1933special} is a polynomial of the form
\begin{align*}
a = \smallsum_{k=0}^{d_a} a_k x^{q^k} = \smallsum_{k=0}^{d_a} a_k x^{[k]}, \quad a_k \in \Fqm, \quad a_{d_a} \neq 0
\end{align*}
with $[i] := q^i$, where $d_a \in \N_0 \cup \{-\infty\}$ is the $q$-degree $\qdeg a$.
The set of all linearized polynomials for given $q$ and $m$ is denoted by $\Lset$.
The addition $+$ in $\Lset$ is defined as for ordinary polynomials and the multiplication $\mul$ as
\begin{align*}
a \mul b = \smallsum_{i} \big( \smallsum_{j=0}^{i} a_j b_{i-j}^{[j]} \big) x^{[i]}.
\end{align*}
Note that if $\Lset$ is seen as a subset of $\Fqm[x]$, the multiplication~$\mul$ equals the composition of two polynomials. 
It is shown in \cite{ore1933special} that $(\Lset,+,\mul)$ is a (non-commutative) ring with multiplicative identity $x^{[0]}=x$.
For $s \in \N$, we define $\Lsetmaxs := \{a \in \Lset \, : \, \qdeg a \leq s\}$ and $\Lsmallers$ analogously.
A polynomial $a$ is called monic if $a_{\qdeg a} = 1$.
It is easy to see that $\qdeg(a \mul b) = \qdeg a + \qdeg b$ and $\qdeg(a+b) \leq \max\{\qdeg a, \qdeg b\}$.

For $a \in \Lset$, we define an evaluation map
\begin{align*}
a(\cdot) \, : \, \Fqm \to \Fqm \; , \; \alpha \mapsto a(\alpha) = \smallsum_{i} a_i \alpha^{[i]},
\end{align*}
which is an $\Fq$-linear for any $a \in \Lset$.
Thus, the root space $\RSpace{a} = \{\alpha \in \Fqm \, : \, a(\alpha) = 0\}$ is a subspace of $\Fqm$.
It is also clear that $(a \mul b)(\alpha) = a(b(\alpha))$.
$\Lset$ is a left and right Euclidean domain as shown by the following lemma.
\begin{lemma}[\cite{ore1933special}]\label{lem:division}
For $a,b \in \Lset$, there exist unique polynomials $\qR,\qL \in \Lset$ (quotients) and $\rR,\rL \in \Lset$ (remainders) such that
$a = \qR \mul b + \rR$ (right division) and $a = b \mul \qL + \rL$ (left division),
where $\qdeg \rR  < \qdeg b$ and $\qdeg \rL < \qdeg b$.
\end{lemma}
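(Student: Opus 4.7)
The strategy is the usual Euclidean leading-term cancellation, done inductively, with uniqueness read off from the product-degree identity $\qdeg(u \mul v) = \qdeg u + \qdeg v$ noted just before the lemma. Because $\mul$ is non-commutative, I would treat the two cases separately, but by symmetric arguments.

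For right division I argue by induction on $d_a := \qdeg a$. If $d_a < d_b := \qdeg b$ there is nothing to do: take $\qR = 0$ and $\rR = a$. Otherwise I use that for any $c \in \Fqm$ and any $e \in \N_0$, the leading coefficient of $(c x^{[e]}) \mul b$ is $c \cdot b_{d_b}^{[e]}$. Setting $e = d_a - d_b$ and choosing $c = a_{d_a} \bigl(b_{d_b}^{[e]}\bigr)^{-1}$ — well-defined since $b_{d_b} \neq 0$ — the polynomial $\tilde a := a - (c x^{[e]}) \mul b$ has strictly smaller $q$-degree than $a$. Applying the induction hypothesis to $\tilde a$ yields $\tilde \qR, \rR$ with $\tilde a = \tilde \qR \mul b + \rR$ and $\qdeg \rR < d_b$, and $\qR := c x^{[e]} + \tilde \qR$ closes the induction.

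The left-division case runs identically after one adjustment. The leading coefficient of $b \mul (c x^{[e]})$ with $e = d_a - d_b$ is $b_{d_b} \cdot c^{[d_b]}$, so to cancel $a_{d_a} x^{[d_a]}$ one needs $c$ with $c^{[d_b]} = a_{d_a}/b_{d_b}$. Such a $c$ exists and is unique because the Frobenius $\alpha \mapsto \alpha^q$ is an automorphism of the finite field $\Fqm$, hence its $d_b$-th iterate is bijective. With this choice the induction step goes through verbatim. This invertibility of iterated Frobenius is the one nontrivial ingredient that distinguishes the non-commutative setting from the ordinary one, and it is the main (though mild) obstacle I anticipate.

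Uniqueness in both cases follows from the same degree argument. If $a = \qR \mul b + \rR = \qR' \mul b + \rR'$ with $\qdeg \rR, \qdeg \rR' < d_b$, then $(\qR - \qR') \mul b = \rR' - \rR$; were $\qR \neq \qR'$, the left side would have $q$-degree at least $d_b$ while the right side has $q$-degree strictly below $d_b$, a contradiction. The left-division case is identical with the roles of the factors swapped.
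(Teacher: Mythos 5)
Your proof is correct; the paper itself gives no proof of this lemma, citing Ore (1933), and your leading-term-cancellation induction together with the degree argument for uniqueness is exactly the standard argument behind that citation (note that degrees of products add because the leading coefficient of $u \mul v$ is $u_{d_u} v_{d_v}^{[d_u]} \neq 0$, which is what makes the uniqueness step work). You also correctly isolate the one genuinely non-commutative point --- that left division requires inverting the iterated Frobenius on the coefficient, which is possible precisely because $\alpha \mapsto \alpha^q$ is an automorphism of $\Fqm$.
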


Lemma \ref{lem:division} allows us to define a (right) modulo operation on $\Lset$ such that $a \equiv b \mod c$ if $\exists$ $d \in \Lset$ such that $a = b + d \mul c$.
In the following, we use this definition of "mod".

Division also immediately gives us a linearized equivalent to the \emph{Extended Euclidean algorithm} (LEEA).
In \cite{wachter2013decoding}, a LEEA with stopping condition is presented such that for $a,b \in \Lset$, $d_\mathrm{stop} \in \mathbb{N}$, $[\rout,\uout,\vout] \gets \ALEEA{a,b,d_\mathrm{stop}}$ outputs polynomials with $\rout = \uout \mul a + \vout \mul b$ is $\rout$ is the first remainder appearing in the LEEA with $\deg \rout < d_\mathrm{stop}$.

\emph{Minimal subspace polynomials} are special linearized polynomials, with the property that their $q$-degree is equal to their number of linearly independent roots.
\begin{lemma}[\cite{wachter2013decoding}]\label{lem:MSP}
Let $\U$ be a subspace of $\Fqm$.
Then there is a unique nonzero monic polynomial $\MSP{\U} \in \Lset$ of minimal degree $\qdeg \MSP{\U} =\dim \U$ such that $\ker(\MSP{\U}) = \U$.
$\MSP{\U}$ is called minimal subspace polynomial (MSP) of $\U$.
\end{lemma}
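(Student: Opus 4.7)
The plan is to prove existence and uniqueness separately, with a single lever used in both: for any nonzero $a \in \Lset$, one has $\dim \ker(a) \le \qdeg a$. This is because, viewed as an element of $\Fqm[x]$, $a$ has ordinary degree $q^{\qdeg a}$ and hence at most $q^{\qdeg a}$ roots, while $\ker(a)$ is an $\Fq$-subspace of cardinality $q^{\dim \ker(a)}$.

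For existence, I would induct on $k = \dim \U$. The base case $k = 0$ is trivial: take $\MSP{\U}(x) = x$. For the inductive step, fix a basis $u_1,\dots,u_k$ of $\U$, set $\U' = \LH{u_1,\dots,u_{k-1}}$, and let $p' := \MSP{\U'}$ be provided by the inductive hypothesis. Since $u_k \notin \U' = \ker p'$, the element $\gamma := p'(u_k) \in \Fqm$ is nonzero. Define
\[
p(x) := (p'(x))^q - \gamma^{q-1}\, p'(x),
\]
where $(p'(x))^q$ denotes the ordinary $q$-th power in $\Fqm[x]$. By the Frobenius identity $\bigl(\smallsum_i a_i x^{[i]}\bigr)^q = \smallsum_i a_i^q\, x^{[i+1]}$, this $q$-th power is again linearized and has $q$-degree $k$ with leading coefficient $1$ inherited from $p'$; hence $p \in \Lset$ is monic of $q$-degree $k$. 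A direct check shows that $p$ vanishes on $\U'$ and that $p(u_k) = \gamma^q - \gamma^{q-1}\gamma = 0$; the $\Fq$-linearity of the evaluation map extends this to $\ker p \supseteq \U$. The lever then yields $\dim \ker p \le k = \dim \U$, forcing $\ker p = \U$.

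For uniqueness, suppose $p^\star \in \Lset$ were a second monic polynomial of $q$-degree $k$ with $\ker p^\star = \U$. Then $p - p^\star$ is linearized of $q$-degree strictly less than $k$ (the monic leading terms cancel) and still vanishes on $\U$, so the lever forces $p - p^\star = 0$. Minimality of the degree comes for free from the same lever: any nonzero $a \in \Lset$ with $\ker a \supseteq \U$ satisfies $\qdeg a \ge \dim \ker a \ge k$, so no polynomial of smaller $q$-degree can cut out $\U$.

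The one step that deserves care, and is essentially the only obstacle, is the Frobenius computation in the inductive construction. The operation $(p'(x))^q$ is the ordinary polynomial $q$-th power in $\Fqm[x]$, not the composition product $\mul$ of the ring $(\Lset,+,\mul)$; one must invoke the characteristic-$p$ identity to see that it preserves the linearized structure and preserves the monic leading term. Once this is in place, the remaining verifications are routine applications of $\Fq$-linearity and the degree/kernel inequality above.
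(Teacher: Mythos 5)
Your proof is correct. The paper imports this lemma from the reference without giving its own argument, so there is no in-paper proof to match against; judged on its own, every step checks out: the counting lever $\dim\ker(a)\leq\qdeg a$ (a nonzero $a\in\Lset$ has ordinary degree $q^{\qdeg a}$, hence at most $q^{\qdeg a}$ roots, while $\ker(a)$ contains $q^{\dim\ker(a)}$ of them) correctly delivers minimality, uniqueness, and the upgrade from $\ker p\supseteq\U$ to $\ker p=\U$ in the inductive step, and the Frobenius identity does preserve both the linearized form and the monic leading term. It is worth noting that your construction $p=(p')^q-\gamma^{q-1}p'$ with $\gamma=p'(u_k)$ is precisely the composition $\left(x^{[1]}-\gamma^{q-1}x^{[0]}\right)\mul p' = M_{\LH{\gamma}}\mul p'$, i.e., the special case $A=\{u_1,\dots,u_{k-1}\}$, $B=\{u_k\}$ of the recursion $\MSP{\U}=\MSP{\LH{\MSP{\LH{A}}(B)}}\mul\MSP{\LH{A}}$ in Lemma~\ref{lem:MSP_recursion} that drives Algorithm~\ref{alg:MSP}; so your existence argument is the ``unbalanced'' one-element-at-a-time version of the divide-and-conquer the paper uses algorithmically, which is exactly the natural route for a bare existence proof even though it would be quadratic as an algorithm.
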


\emph{Multi-point evaluation} (MPE) is the process of evaluating a polynomial $a \in \Lset$ at multiple points.
The dual problem is called \emph{interpolation} and based on the following lemma.

\begin{lemma}[\cite{silva2007rank}]\label{lem:interpolation_existence}
Let $(x_1,y_1),\dots,(x_s,y_s) \in \Fqm^2$, linearly independent $x_i$'s. Then there exists a unique interpolation polynomial $\IP{\{(x_i,y_i)\}_{i=1}^{s}} \in \Lsmallers$ such that $\IP{\{(x_i,y_i)\}_{i=1}^{s}}(x_i) = y_i$ for all $i$.
\end{lemma}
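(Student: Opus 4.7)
The plan is to establish uniqueness via a root-space dimension bound and existence via an explicit Lagrange-style construction using the minimal subspace polynomials from Lemma~\ref{lem:MSP}. The one non-routine step, and the closest thing to a genuine obstacle, is the inequality $\dim \RSpace{b} \leq \qdeg b$ for nonzero $b \in \Lset$; once this is in hand, everything else is a direct transcription of the classical polynomial interpolation argument with MSPs playing the role of the Lagrange basis polynomials.

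For uniqueness, I would suppose two polynomials $a_1, a_2 \in \Lsmallers$ both interpolate the data. Their difference $a := a_1 - a_2 \in \Lsmallers$ is $\Fq$-linear and vanishes on every $x_i$, so by the linear independence of $x_1, \ldots, x_s$ the root space $\RSpace{a}$ has $\Fq$-dimension at least $s$. The bound referenced above follows by setting $\U := \RSpace{b}$, invoking Lemma~\ref{lem:MSP} to produce a monic polynomial $\MSP{\U}$ of $q$-degree $\dim \U$ vanishing on $\U$, and observing that right-dividing $b$ by $\MSP{\U}$ via Lemma~\ref{lem:division} must leave zero remainder (else the remainder would be a nonzero polynomial of smaller $q$-degree vanishing on $\U$, violating the minimality clause of Lemma~\ref{lem:MSP}); this yields $\qdeg b \geq \dim \U$. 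Applied to $a$, it forces $\qdeg a \geq s$, contradicting $a \in \Lsmallers$ unless $a = 0$.

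For existence, I would imitate the classical Lagrange construction. For each $i$, set $M_i := \MSP{\LH{x_j : j \neq i}}$; by Lemma~\ref{lem:MSP}, $\qdeg M_i = s - 1$ and $M_i(x_j) = 0$ for $j \neq i$, while linear independence of the $x_i$'s gives $x_i \notin \LH{x_j : j \neq i}$, hence $M_i(x_i) \neq 0$. The normalized polynomial $L_i := M_i(x_i)^{-1} \cdot M_i \in \Lsmallers$ therefore satisfies $L_i(x_j) = \delta_{ij}$, and
\begin{align*}
\IP{\{(x_i,y_i)\}_{i=1}^{s}} := \smallsum_{i=1}^{s} y_i \cdot L_i \; \in \; \Lsmallers
\end{align*}
takes the prescribed value $y_j$ at each $x_j$ by the $\Fq$-linearity of evaluation, which also ensures the sum remains in $\Lsmallers$.
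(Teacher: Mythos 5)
Your proof is correct and takes the standard route of the cited source---the same one the paper itself alludes to in Table~\ref{tab:overview_operations_linearized_new} (``Lagrange bases \cite{silva2007rank}''): uniqueness via the bound $\dim \ker(b) \leq \qdeg b$ for nonzero $b$, and existence via normalized minimal subspace polynomials playing the role of Lagrange basis polynomials. One small patch in the uniqueness half: a nonzero remainder $\remainder$ only satisfies $\ker(\remainder) \supseteq \U$, not necessarily $\ker(\remainder) = \U$, so it does not literally contradict the minimality clause of Lemma~\ref{lem:MSP} for $\U$; you should instead apply that lemma to the (possibly larger) subspace $\ker(\remainder)$, whose dimension is at least $\dim \U > \qdeg \remainder$---or argue more directly by counting roots, since $\ker(\remainder)$ contains $q^{\dim \U}$ elements while $\remainder$ has ordinary degree $q^{\qdeg \remainder}$.
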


\subsection{Skew Polynomials}

The ring of \emph{skew polynomials} $\Fqm[x;\autom]$ \cite{ore1933theory} with automorphism $\autom$, is defined as the set of polynomials $\sum_i a_i x^i$, $a_i \in \Fqm$, with multiplication rule $xa = \autom(a) x$ $\forall a \in \Fqm$ and ordinary component-wise addition.
The degree is defined as usual.
$\Fqm[x;\autom]$ is left and right Euclidean, i.e., Lemma \ref{lem:division} also holds for skew polynomials.
There is a ring isomorphism $\varphi : \Lset \to \Fqm[x;\cdot^q], \, \sum_{i} a_i x^{[i]} \mapsto \sum_{i} a_i x^i$, where $\cdot^q$ is the \emph{Frobenius automorphism}.
We utilize this fact to obtaining fast algorithms for linearized polynomials.

\printalgoIEEEdoublecolumn{
\DontPrintSemicolon
\KwIn{Received word $\r \in \Fqm^n$; 
$(g_1,\dots,g_n) = (\beta^{[0]},\dots,\beta^{[n-1]})$ normal basis of $\Fqm$ over $\Fq$;
$\aR \in \Fqm^\tR$;
$\BC \in \Fq^{\tC \times n}$
}
\KwOut{Estimated evaluation polynomial $f$ with $\qdeg f < k$ or ``decoding failure''.}
$d_i\Cexp \leftarrow \smallsum_{j=1}^{n}\BC_{i,j}\Normelement^{[j-1]}$ for all $i=1,\dots,\tC$ \hfill \tcp{negligible} \label{line:gao_ee_a}
$\gC \leftarrow \MSP{\LH{d_1\Cexp, \dots, d_\tC\Cexp}}$; $\,$
$\lR \leftarrow \MSP{\LH{a_1\Rexp, \dots, a_\tR\Rexp}}$; $\,$
Calculate $\gCtilde$ as in \eqref{eq:gCtilde} using \eqref{eq:gCbar}. \hfill \tcp{$2 \cdot \OMSP{n} + \BigO{n}$}  \label{line:gao_ee_d}
$\qtr{r} \leftarrow \IP{\{(g_i,r_i)\}_{i=1}^{n}}$; $\,$
$\qtr{y} \gets \lR \cdot \qtr{r} \cdot \gCtilde \mod (x^{[m]}-x^{[0]})$ \hfill \tcp{$\OIP{n} + \ODiv{n}$}  \label{line:gao_ee_f}
$[\rout, \uout, \vout] \gets \ALEEA{\qtr{y}, x^{[m]}-x^{[0]}, \lfloor\tfrac{n+k+\tR+\tC}{2}\rfloor }$ \hfill \tcp{$\ODiv{n} \log(n)$}  \label{line:gao_ee_g}
$[\qL, \rL] \gets \ALDIV{\rout, \uout \mul \lR}$; $\,$
$[\qR, \rR] \gets \ARDIV{\qL, \gCtilde \mod (x^{[m]}-x^{[0]})}$ \hfill \tcp{$2 \mul \ODiv{n}$}  \label{line:gao_ee_i}
\lIf{$\rL = 0$ \textup{\textbf{and}} $\rR = 0$}{\Return{$f \gets \qL$}}
\lElse{\Return{``decoding failure''}}
\caption{$\mathrm{DecodeGaoGabidulinErasures}\big(\r, (g_1,g_2,\dots,g_n), \aR, \BC \big)$ 
{\cite[Algorithm~3.7]{wachter2013decoding}}}
\label{alg:gaoalgo_ee}
}

\subsection{Rank-Metric and Gabidulin Codes}
Codes in the rank-metric are a set of matrices over some finite field $\Fq$ and the \emph{rank distance} between two matrices is defined to be the rank of their difference. 
Using the mapping $\ext$, there is a bijection between any matrix in $\Fq^{m \times n }$ and a vector in $\Fqm^n$. 
By slight abuse of notation, we use $\rk(\vec{a}) := \rank(\ext^{-1}(\Mat{A}))$, where $\vec{a} \in \Fqm^n$ and $\Mat{A}\in \Fq^{m \times n}$.
The \emph{minimum rank distance} $d_R$ of a block code $\mycode{C} \subseteq \Fqm^n$ is
\begin{equation*}
\mathrm{d}_\mathrm{R} = \min \big\lbrace \rk(\vec{c}_1-\vec{c}_2) \; : \; \vec{c}_1,\vec{c}_2 \in \mycode{C}, \vec{c}_1 \neq \vec{c}_2\big\rbrace. 
\end{equation*}

\emph{Gabidulin codes} \cite{Delsarte_1978,Gabidulin_TheoryOfCodes_1985,Roth_RankCodes_1991} are a special class of MRD codes, i.e. $\mathrm{d}_\mathrm{R} = n-k+1$, and 
are considered as the analogs of Reed--Solomon codes in rank metric.
They can be defined by the evaluation of degree-restricted linearized polynomials.

\begin{definition}[Gabidulin Code, \cite{Gabidulin_TheoryOfCodes_1985}]\label{def:gabidulin_code}
Fix $g_1, \dots, g_n \in \Fqm$, linearly independent over $\Fq$.
A linear Gabidulin code $\Gab{n,k}$ over $\Fqm$ of length $n\leq m$ and dimension $k \leq n$ is the set
\begin{equation*}
\Gab{n,k} \triangleq\Big\lbrace \; \big[\begin{matrix}
f(g_1) & \dots & f(g_n)
\end{matrix}\big] : f \in \Lsmallerk \;\Big\rbrace \subseteq \Fqm^n.
\end{equation*}
\end{definition}
Note that the codewords can be seen as matrices in $\Fq^{m \times n}$.

\section{Decoding of Gabidulin Codes}\label{sec:decoding}

This section recalls how to decode errors and erasures with Gabidulin codes from \cite[Section~3.2.3]{wachter2013decoding}, shows which operations on linearized polynomials are required to be fast and which degrees the involved polynomials have.

Let $\c \in \Gab{n,k}$ be a codeword with corresponding information polynomial $f$, $\e \in \Fqm^n$ an error word and $\r = \c+\e$ the received word.
The decoding problem is to recover $\c$ from $\r$ if the rank of $\e$ is not too large.

If nothing about $\e$ is known, we say that \emph{only errors} occurred.
However, especially in applications like \emph{random linear network coding} \cite{silva2008rank}, $\e$ is partly known.
In particular, we can decompose $\e$ into
\begin{align*}
\e = \aE \BE + \aR \BR + \aC \BC,
\end{align*}
where the fragments correspond to 
\begin{itemize}
\item $\rk(\aE \BE) = \tE$ \emph{full errors}: $\aE \in \Fqm^\tE$, $\BE \in \Fq^{\tE \times n}$
\item $\rk(\aR \BR) = \tR$ \emph{row erasures}: $\aR \in \Fqm^\tR$, $\BR \in \Fq^{\tR \times n}$
\item $\rk(\aC \BC) = \tC$ \emph{column erasures}: $\aC \in \Fqm^\tC$, $\BC \in \Fq^{\tC \times n}$
\end{itemize}
and $\aR$ and $\BC$ are known at the receiver.
Note that if $\e$ and its fragments are interpreted as a matrices, $\ext(\aR)$ is a basis of the column space of $\ext(\aR \BR)$ and $\BC$ is a basis of the row space of $\ext(\aC \BC)$.
Using $\r$, $\aR$ and $\BC$, the receiver can compute the polynomials
\begin{align*}
\lR &= \MSP{\LH{a_1\Rexp,\dots,a_\tR\Rexp}}, \;
\qtr{r} = \IP{\{(g_i,r_i)\}_{i=1}^{n}}, \text{ and } \\
\gC &= \MSP{\LH{d_1\Cexp,\dots,d_\tC\Cexp}}, \text{ with } d_i\Cexp = \smallsum_{j=1}^{n} \BM_{i,j}\Cexp \beta^{[j-1]},
\end{align*}
the \emph{full $q$-reverse} $\gCbar \in \Lset^{<m}$ of $\gC$ with coefficients
\begin{align}
\gCbar_i = (\gC_{-i \mymod m})^{[i]}, \quad i=0,\dots,m-1. \label{eq:gCbar}
\end{align}
and the polynomials
\begin{align}
\gCtilde &= \gCbar \mul x^{[\tC]} \mod (x^{[m]}-x^{[0]}), \label{eq:gCtilde}\\
\qtr{y} &= \lR \mul \qtr{r} \mul \gCtilde \mod (x^{[m]}-x^{[0]}). \notag
\end{align}
Moreover, we define the unknown \emph{error locator polynomial}
\begin{align*}
\lE = \MSP{\LH{\lR(a_1\Eexp),\dots,\lR(a_\tE\Eexp)}}.
\end{align*}

With the help of these definitions, we can state the following key equation.
In the error and erasure case ($\tR>0$ or $\tC>0$), it only holds for $n=m$ and the $g_i$'s being a normal basis ($g_i = \beta^{[i-1]}$)\footnote{If $\tR=\tC=0$ (errors only), $\gC=\gCtilde=\lR=x^{[0]}$, and we obtain an ordinary key equation for Gabidulin codes (cf. \cite[Theorem~3.6]{wachter2013decoding}), which holds for arbitrary $g_i$'s and $n\leq m$, by replacing $(x^{[m]}-x^{[0]})$ by $\MSP{\LH{g_1,\dots,g_n}}$.}.
However, this does not appear to be a major disadvantage since e.g. we can use interleaving to obtain non-square matrices as codewords.
\begin{theorem}[{\cite[Theorem~3.8]{wachter2013decoding} and thereafter}]\label{thm:key_equation}
\begin{align*}
\lE \mul \qtr{y} \equiv \lE \mul \lR \mul f \mul \gCtilde \mod (x^{[m]}-x^{[0]})
\end{align*}
\end{theorem}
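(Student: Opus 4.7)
The plan is to rewrite $\qtr{r}$ in terms of $f$ and an interpolation of the error, then verify that the resulting cross-term vanishes modulo $x^{[m]}-x^{[0]}$. Since $n=m$ and the $g_i=\beta^{[i-1]}$ form a normal basis of $\Fqm$, they span $\Fqm$ over $\Fq$; consequently $x^{[m]}-x^{[0]}$ is the minimal subspace polynomial of $\Fqm$ itself and has all of $\Fqm$ as its root space. Hence a congruence modulo $x^{[m]}-x^{[0]}$ between two elements of $\Lset$ is equivalent to equality of the induced $\Fq$-linear maps on $\Fqm$. Setting $\qtr{e}:=\IP{\{(g_i,e_i)\}_{i=1}^{n}}$, both $\qtr{r}$ and $f+\qtr{e}$ interpolate the pairs $(g_i,r_i)$ on the basis $g_1,\dots,g_n$ (since $r_i = f(g_i)+e_i$), so $\qtr{r}\equiv f+\qtr{e}\pmod{x^{[m]}-x^{[0]}}$. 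Premultiplying by $\lR$, postmultiplying by $\gCtilde$, using the definition of $\qtr{y}$, and then premultiplying by $\lE$, yields
\[
\lE\mul\qtr{y}\equiv \lE\mul\lR\mul f\mul\gCtilde \;+\; \lE\mul\lR\mul\qtr{e}\mul\gCtilde \pmod{x^{[m]}-x^{[0]}},
\]
so the claim reduces to $\lE(\lR(\qtr{e}(\gCtilde(\alpha))))=0$ for every $\alpha\in\Fqm$.

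By linearity of interpolation, decompose $\qtr{e}=\qtr{e^E}+\qtr{e^R}+\qtr{e^C}$, where $\qtr{e^X}$ interpolates the values of $\vec{a}^X\BM^X$ at $g_1,\dots,g_n$ for $X\in\{E,R,C\}$. Expanding an arbitrary $\alpha=\sum_j\alpha_j g_j$ with $\alpha_j\in\Fq$, one obtains $\qtr{e^X}(\alpha)=\smallsum_\ell a_\ell^X(\BM^X\boldsymbol{\alpha})_\ell$, so the image of $\qtr{e^X}$ lies in $\LH{a_1^X,\dots,a_{t_X}^X}$. For the full-error contribution, the image of $\lR\mul\qtr{e^E}$ therefore sits inside $\LH{\lR(a_1^E),\dots,\lR(a_\tE^E)}=\ker(\lE)$, hence $\lE\mul\lR\mul\qtr{e^E}$ is the zero $\Fq$-linear map on $\Fqm$ and the trailing factor $\gCtilde$ is inconsequential. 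The row-erasure contribution is even easier: the image of $\qtr{e^R}$ already lies in $\ker(\lR)$ by definition of $\lR$, so $\lR\mul\qtr{e^R}\equiv 0\pmod{x^{[m]}-x^{[0]}}$.

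The main obstacle is the column-erasure contribution $\lE\mul\lR\mul\qtr{e^C}\mul\gCtilde$, because $\gCtilde$ now sits on the \emph{right}: it must map $\Fqm$ into $\ker(\qtr{e^C})$, rather than being applied to an already-vanishing argument. Reading off coordinates in the normal basis, $\ker(\qtr{e^C})$ is the preimage of the right kernel of $\BC$ under the isomorphism $\alpha\leftrightarrow(\alpha_1,\dots,\alpha_n)$. The desired identity $\qtr{e^C}\mul\gCtilde\equiv 0\pmod{x^{[m]}-x^{[0]}}$ therefore reduces to the claim that for every $\alpha\in\Fqm$ the normal-basis coordinates of $\gCtilde(\alpha)$ lie in $\ker(\BC)$. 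This is exactly what the $q$-reverse \eqref{eq:gCbar} and the $x^{[\tC]}$-shift \eqref{eq:gCtilde} are designed to achieve: unfolding the definitions, using the cyclic identity $x^{[m+k]}\equiv x^{[k]}\pmod{x^{[m]}-x^{[0]}}$ together with $g_j=\beta^{[j-1]}$, and reindexing the resulting double sum, one converts the annihilation $\gC(d_\ell^C)=0$, with $d_\ell^C=\smallsum_j B_{\ell,j}^C\beta^{[j-1]}$, into the dual statement that $\gCtilde(\Fqm)$ coincides with the normal-basis image of $\ker(\BC)$. Executing this reindexing argument is the technical heart of the proof; once it is in place, combining the three cases gives $\lE\mul\lR\mul\qtr{e}\mul\gCtilde\equiv 0$ and the theorem follows.
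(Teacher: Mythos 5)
Your overall strategy is sound and is essentially the standard derivation behind \cite[Theorem~3.8]{wachter2013decoding}; note that the paper itself offers no proof of this theorem but only the citation, so your argument has to stand on its own. The first three quarters of it do: interpreting congruence modulo $x^{[m]}-x^{[0]}$ as equality of $\Fq$-linear maps on $\Fqm$, writing $\qtr{r}\equiv f+\qtr{e}$, splitting $\qtr{e}$ into the three fragments, and annihilating the full-error part via $\mathrm{im}(\lR\mul\qtr{e}\Eexp)\subseteq\LH{\lR(a_1\Eexp),\dots,\lR(a_\tE\Eexp)}=\ker(\lE)$ and the row-erasure part via $\mathrm{im}(\qtr{e}\Rexp)\subseteq\ker(\lR)$ are all correct and complete.

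The gap is the column-erasure case, which you yourself label ``the technical heart'' and then do not execute; and the reduction you assert there is not a routine reindexing. You need that for every $\alpha\in\Fqm$ the coordinate vector $\boldsymbol{\gamma}$ of $\gamma=\gCtilde(\alpha)=\gCbar(\alpha^{[\tC]})$ with respect to the normal basis $\{\beta^{[j-1]}\}$ satisfies $\BC\boldsymbol{\gamma}=0$. But the $j$-th such coordinate is extracted as $\gamma_j=\Tr\big(\gamma\,(\beta^{\perp})^{[j-1]}\big)$, where $\beta^{\perp}$ generates the \emph{dual} normal basis, and the adjoint identity for the full $q$-reverse is $\Tr(\gCbar(\delta)\,\epsilon)=\Tr(\delta\,\gC(\epsilon))$. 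Hence $(\BC\boldsymbol{\gamma})_\ell=\Tr\big(\alpha^{[\tC]}\,\gC(\smallsum_j\BC_{\ell,j}(\beta^{\perp})^{[j-1]})\big)$, i.e.\ what must vanish is $\gC$ evaluated at the \emph{dual-basis} combination of the $\ell$-th row of $\BC$, whereas the hypothesis $\gC(d_\ell\Cexp)=0$ concerns the primal-basis combination $d_\ell\Cexp=\smallsum_j\BC_{\ell,j}\beta^{[j-1]}$. (Also, in this trace computation the factor $x^{[\tC]}$ merely permutes $\alpha$ over $\Fqm$, so it cannot be what ``makes the identity work''; it only controls $\qdeg\gCtilde\leq\tC$.) Closing this primal-versus-dual mismatch --- via a self-dual basis assumption, a dual-basis convention hidden in $\ext$, or an explicit computation replacing the hand-waved ``reindexing'' --- is precisely the missing content, so as written you have not established $\qtr{e}\Cexp\mul\gCtilde\equiv 0 \mod (x^{[m]}-x^{[0]})$.
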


\subsection{Decoding Algorithm}

\begin{theorem}
If $2 \tE + \tR + \tC \leq d-1 = n-k$, Algorithm~\ref{alg:gaoalgo_ee} finds the correct information polynomial $f$.
\end{theorem}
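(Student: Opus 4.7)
The plan is to reduce correctness of Algorithm~\ref{alg:gaoalgo_ee} to the key equation of \cref{thm:key_equation} combined with the standard correctness argument for Gao-style decoding via the LEEA with stopping condition, adapted to the linearized polynomial ring $\Lset$. The idea is to show that the pair $(\lE,\, \lE \mul \lR \mul f \mul \gCtilde)$ is precisely the tuple recovered (up to a common left factor) by the call to $\ALEEA{\cdot}$ on Line~\ref{line:gao_ee_g}, after which the two divisions on Line~\ref{line:gao_ee_i} strip off the erasure/locator factors and isolate $f$.

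First I would verify the degree bookkeeping. Since $\qdeg \lE = \tE$, $\qdeg \lR = \tR$, $\qdeg \gCtilde = \tC$ (with $\gCtilde$ having $q$-degree strictly less than $m$), and $\qdeg f \leq k-1$, we obtain
\begin{align*}
\qdeg(\lE \mul \lR \mul f \mul \gCtilde) \;\leq\; \tE + \tR + (k-1) + \tC.
\end{align*}
Using the hypothesis $2\tE + \tR + \tC \leq n-k$ gives $\qdeg(\lE \mul \lR \mul f \mul \gCtilde) \leq \tfrac{n+k+\tR+\tC}{2} - 1 < \lfloor\tfrac{n+k+\tR+\tC}{2}\rfloor = d_{\mathrm{stop}}$. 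In particular, since $n=m$, no actual reduction modulo $(x^{[m]}-x^{[0]})$ occurs on the right-hand side of \cref{thm:key_equation}. Likewise, $\qdeg \lE = \tE \leq \tfrac{n-k-\tR-\tC}{2}$ is below the complementary stopping degree.

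Next I would invoke the LEEA stopping-condition correctness result from \cite{wachter2013decoding}, the linearized-polynomial analog of the classical Gao/Welch--Berlekamp argument. By \cref{thm:key_equation}, the tuple $(\lE,\, \lE \mul \lR \mul f \mul \gCtilde)$ is a valid solution of $\rout \equiv \uout \mul \qtr{y} \pmod{x^{[m]}-x^{[0]}}$ with $\qdeg\rout < d_{\mathrm{stop}}$. The degree bounds just established imply that the output $[\rout,\uout,\vout]$ of $\ALEEA{\qtr{y},\, x^{[m]}-x^{[0]},\, d_{\mathrm{stop}}}$ is, up to a common left factor $\lambda \in \Lset$, exactly
\begin{align*}
\uout = \lambda \mul \lE, \qquad \rout = \lambda \mul \lE \mul \lR \mul f \mul \gCtilde.
\end{align*}

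Finally I would finish by tracking the two divisions on Line~\ref{line:gao_ee_i}. The left division of $\rout = (\lambda \mul \lE \mul \lR) \mul (f \mul \gCtilde)$ by $\uout \mul \lR = \lambda \mul \lE \mul \lR$ yields quotient $\qL = f \mul \gCtilde$ and remainder $\rL = 0$. The right division of $\qL = f \mul \gCtilde$ by $\gCtilde$ (since $\qdeg \gCtilde < m$, the mod has no effect) yields quotient $\qR = f$ and remainder $\rR = 0$. Both remainders vanish, so the algorithm returns $\qL$, which equals the correct information polynomial $f$. The main obstacle is step two: invoking the LEEA correctness theorem in the non-commutative ring $\Lset$ and ruling out spurious common left factors, which is where the tight degree inequalities from the first step are essential.
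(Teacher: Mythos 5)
Your proposal is correct and follows essentially the same route as the paper: the same degree bound $\qdeg(\lE \mul \lR \mul f \mul \gCtilde) < \lfloor\tfrac{n+k+\tR+\tC}{2}\rfloor$ derived from $2\tE+\tR+\tC\leq n-k$, the same appeal to the LEEA correctness result of \cite{wachter2013decoding} to identify $\uout$ and $\rout$ with $\lE$ and $\lE \mul \lR \mul f \mul \gCtilde$, and the same two divisions to strip off the locator and erasure factors. Your additional care about a possible common left factor $\lambda$ and the explicit verification that both remainders vanish are minor refinements of what the paper delegates to the cited reference, not a different argument.
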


\begin{proof}
Since $\qtr{y}$ is known and (cf. Table~\ref{tab:degrees})
\begin{align*}
\qdeg (\lE \lR f \gCtilde) < \lfloor \tfrac{n-k-\tR-\tC}{2} \rfloor + \tR + k + \tC = \lfloor \tfrac{n+k+\tR+\tC}{2} \rfloor,
\end{align*}
we can use the LEEA to obtain
\begin{align*}
[\rout,\uout,\vout] = \ALEEA{\qtr{y}, x^{[m]}-x^{[0]}, \lfloor\tfrac{n+k+\tR+\tC}{2}\rfloor }
\end{align*}
with $\rout = \uout \mul \qtr{y} + \vout \mul (x^{[m]}-x^{[0]})$ and $\qdeg \rout < \lfloor\tfrac{n+k+\tE+\tR}{2}\rfloor$.
It is shown in \cite{wachter2013decoding} that if $2 \tE + \tR + \tC \leq n-k$,
\begin{align*}
\uout = \lE \text{ and }
\rout = \lE \mul \lR \mul f \mul \gCtilde.
\end{align*}
Hence, we can obtain the evaluation polynomial $f$ by left-dividing $\rout$ by $\uout \mul \lR$ and then right-dividing it by $\gCtilde$.
\end{proof}

\subsection{Degrees of Involved Polynomials}

The degrees of the polynomials defined in this section are summarized in Table~\ref{tab:degrees}.
Since $\tE,\tR,\tC \leq n = m$, the following lemma is correct.
\begin{lemma}
All polynomials used in Alg.~\ref{alg:gaoalgo_ee} have $\qdeg \in \BigO{n}$.
\end{lemma}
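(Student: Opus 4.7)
The plan is to walk through every linearized polynomial constructed in Algorithm~\ref{alg:gaoalgo_ee} and exhibit an explicit upper bound on its $q$-degree in terms of $n$, $m$, $k$, $\tE$, $\tR$, $\tC$, then invoke the hypotheses $m=n$ and $\tE,\tR,\tC \leq n$ to conclude that every such bound is $\mathcal{O}(n)$. This is just careful bookkeeping using the structural lemmas already stated in the preliminaries.

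First I would handle the polynomials built from the inputs. By Lemma~\ref{lem:MSP}, $\qdeg \gC = \dim \LH{d_1\Cexp,\dots,d_\tC\Cexp} \leq \tC$ and $\qdeg \lR \leq \tR$, both at most $n$. The $q$-reverse $\gCbar$ has at most $m$ coefficients by definition \eqref{eq:gCbar}, so $\qdeg \gCbar < m = n$. Any polynomial obtained as a remainder modulo $x^{[m]} - x^{[0]}$---in particular $\gCtilde$ on Line~\ref{line:gao_ee_d} and $\qtr{y}$ on Line~\ref{line:gao_ee_f}---has $q$-degree strictly less than $m = n$ by Lemma~\ref{lem:division}. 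The interpolation polynomial $\qtr{r}$ has $\qdeg \qtr{r} < n$ directly by Lemma~\ref{lem:interpolation_existence}.

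Next I would bound the LEEA outputs on Line~\ref{line:gao_ee_g}. The stopping degree $\lfloor (n+k+\tR+\tC)/2 \rfloor$ is at most $n$, so $\qdeg \rout < n$. For the Bezout coefficients, the standard invariants of the (linearized) Euclidean algorithm give $\qdeg \uout \leq \qdeg (x^{[m]}-x^{[0]}) - \qdeg \rout \leq m = n$ and an analogous bound $\qdeg \vout \leq \qdeg \qtr{y} \leq n$; these follow by the same telescoping argument as in the commutative setting, using $\qdeg(a \mul b) = \qdeg a + \qdeg b$. Finally, on Line~\ref{line:gao_ee_i}, Lemma~\ref{lem:division} bounds quotients by the dividend degree and remainders by the divisor degree, so $\qdeg(\uout \mul \lR) \leq n + n$, and $\qL, \rL, \qR, \rR$ are all bounded in terms of quantities already shown to be $\mathcal{O}(n)$.

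The only step requiring any care is the bound on $\uout$ and $\vout$, since the LEEA is iterative and a naive view might let the cofactors blow up. This is not really an obstacle, however: the degree relations in $\Lset$ are identical to those in $\Fqm[x]$ because $\qdeg(a\mul b) = \qdeg a + \qdeg b$, so the textbook proof transfers verbatim. Collecting all the bounds and taking the maximum yields the claimed $\mathcal{O}(n)$ bound uniformly over the polynomials appearing in the algorithm.
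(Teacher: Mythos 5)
Your proof is correct and follows essentially the same route as the paper, which simply tabulates the $q$-degree of each polynomial (Table~\ref{tab:degrees}) and observes that $\tE,\tR,\tC \leq n = m$. Your write-up is somewhat more complete in that you also explicitly bound the LEEA cofactors $\uout,\vout$ and the quotients and remainders of Line~\ref{line:gao_ee_i}, which the paper's table leaves implicit, but the underlying argument---degree bookkeeping via Lemmas~\ref{lem:division}, \ref{lem:MSP} and \ref{lem:interpolation_existence}---is the same.
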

This statement also implies \cite[Remark~8]{puchinger2015fast}, which holds for non-degenerate cases (i.e. $\tE,\tR,\tC \in \Theta(n)$, or in the errors-only case by using a different algorithm).

\begin{table}[h]
\caption{$q$-degrees of Polynomials used in \cref{alg:gaoalgo_ee}}
\label{tab:degrees}
\centering
\begin{tabular}{l|l|l}
\toprule
a 	& $\qdeg a$ 		& Reason \\
\midrule
$\qtr{r}$ 	& $< n$ 		& Interpolation at $n$ points. \\
$\gC$ 		& $= \tC$ 		& $\dim(\LH{d_1\Cexp,\dots,d_\tC\Cexp}) = \tC$. \\
$\gCtilde$	& $\leq \tC$	& $\gCtilde_i = \gC_{(\tC-i) \mymod m} = 0 \; \forall \, i>\tC$. \\
$\qtr{y}$	& $< m$			& Reduced modulo $(x^{[m]}-x^{[0]})$. \\
$\lE$		& $\leq \tE$	& $\dim(\LH{\lR(a_1\Eexp),\dots,\lR(a_\tE\Eexp)}) \leq \tE$. \\
\bottomrule
\end{tabular}
\end{table}

\subsection{Required Operations on Linearized Polynomials}

It was shown on \cite{wachter2013decoding} that the LEEA with polynomials in $\Lsetmaxs$ requires $\log(s)$ many divisions.
Using this, the operations on linearized polynomials used in \cref{alg:gaoalgo_ee} are outlined in \cref{tab:operations}, together with a notation for the respective complexity.
\begin{table}[h]
\caption{Operations used in \cref{alg:gaoalgo_ee}}
\label{tab:operations}
\centering
\begin{tabular}{l|c}
\toprule
Operation ($a,b \in \Lsetmaxs$, $U \subseteq \Fqm$: $|U| \leq s$) 	& Complexity Notation \\
\midrule
Multiplication $a \mul b$					& $\OMul{s}$ \\
Right (or left) division of $a$ by $b$		& $\ODiv{s}$ \\
Calculation of $\MSP{\LH{U}}$					& $\OMSP{s}$ \\
MPE of $a$ at elements of $U$				& $\OMPE{s}$ \\
Interpolation at $\leq s$ point tuples		& $\OIP{s}$ \\
\bottomrule
\end{tabular}
\end{table}

Hence, the decoding complexity is directly determined by these operations.
The next section shows that they can all be accomplished in sub-quadratic time in $s$.

\section{Fast Algorithms}\label{sec:fast_algos}

In this section, we present fast multiplication and division algorithms in $\Fqm[x;\autom]$ and methods for MPE, calculation of MSPs and interpolation in $\Lset$ with subquadratic complexity.
Complexities are counted in operations in $\Fqm$.
All algorithms and proofs are presented in full detail in the extended version of this paper \cite{puchinger2015fast}.
Here, we give brief summaries in order to outline proof ideas.

\subsection{Fast Multiplication}\label{subsec:fast_mul}

We generalize the fast multiplication algorithm for linearized polynomials from \cite[Theorem~3.1]{wachter2013decoding} to skew polynomials.
This generalization is needed for the division algorithm in Section~\ref{subsec:fast_div}.
We consider polynomials $a,b \in \Asetmaxs$ and define $\sstar := \lceil \sqrt{s+1} \rceil$.

\begin{theorem}\label{thm:multiplication}
If $\automi{i}(\alpha)$ can be computed in $\BigO{1}$ over $\Fqm$, the multiplication of $a,b \in \Asetmaxs$ using Algorithm~\ref{alg:Mul} costs
\begin{align*}
\OMulSkew{s} \in \BigO{s^{\frac{\omega+1}{2}}}.
\end{align*}
\end{theorem}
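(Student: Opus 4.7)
My plan is to generalize the block-decomposition algorithm of \cite{wachter2013decoding} (originally stated for linearized polynomials) to skew polynomials in $\Aset$, reducing the bulk of the work to $\BigO{\sstar}$ dense matrix multiplications of size $\sstar \times \sstar$ over $\Fqm$. Since each such multiplication costs $\BigO{(\sstar)^\omega}$, this will give total cost $\BigO{\sstar \cdot (\sstar)^\omega} = \BigO{(\sstar)^{\omega+1}} = \BigO{s^{(\omega+1)/2}}$, as claimed.

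First, I would split both inputs into blocks of size $\sstar$,
\[
a = \smallsum_{j=0}^{\sstar-1} a^{(j)} \amul x^{j \sstar}, \qquad b = \smallsum_{k=0}^{\sstar-1} b^{(k)} \amul x^{k \sstar},
\]
where each $a^{(j)}, b^{(k)} \in \Aset$ has degree less than $\sstar$; this accounts for all $\leq s+1 \leq (\sstar)^2$ coefficients of $a$ and $b$. Iterating the skew commutation rule $x \amul c = \autom(c) \amul x$ for $c \in \Fqm$ then gives
\[
a \amul b = \smallsum_{j,k=0}^{\sstar-1} a^{(j)} \amul \automi{j \sstar}\!\big(b^{(k)}\big) \amul x^{(j+k) \sstar},
\]
with $\automi{j \sstar}$ applied coefficient-wise to $b^{(k)}$.

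The \emph{main obstacle} is that expanding the inner product $a^{(j)} \amul \automi{j \sstar}(b^{(k)})$ produces a twist $\automi{u + j \sstar}$ on the coefficients of $b^{(k)}$, coupling the row index $j$ of the hoped-for matrix with the summation index $u$ and obstructing a direct matrix-multiplication formulation. I would resolve this by using that $\autom$ is a ring automorphism of $\Fqm$: setting $\bar{a}^{(j)}_u := \automi{-j \sstar}(a^{(j)}_u)$, a direct check shows that the coefficient of $x^l$ in $a^{(j)} \amul \automi{j \sstar}(b^{(k)})$ equals
\[
\automi{j \sstar}\!\left(\smallsum_{u+v=l} \bar{a}^{(j)}_u \amul \automi{u}(b^{(k)}_v)\right),
\]
and the inner sum is exactly the $(j,k)$-entry of $M \cdot N^{(l)}$, where $M \in \Fqm^{\sstar \times \sstar}$ is given by $M_{j,u} = \bar{a}^{(j)}_u$ (independent of $l$) and $N^{(l)} \in \Fqm^{\sstar \times \sstar}$ is given by $N^{(l)}_{u,k} = \automi{u}(b^{(k)}_{l-u})$ (independent of $j$, with out-of-range coefficients declared zero).

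Finally, the cost count is routine. By the hypothesis on $\automi{i}$, assembling $M$, building each of the $2\sstar-1$ matrices $N^{(l)}$, and re-applying $\automi{j \sstar}$ to undo the outer twist each cost $\BigO{1}$ per entry, summing to $\BigO{(\sstar)^3} = \BigO{s^{3/2}}$ overall. The dominating contribution is the $2\sstar-1$ matrix multiplications, costing $\BigO{\sstar \cdot (\sstar)^\omega} = \BigO{s^{(\omega+1)/2}}$. Reassembling $a \amul b$ from the resulting entries (grouping terms with $l + (j+k)\sstar = m$) requires only reindexing and $\BigO{s}$ additions in $\Fqm$, which is absorbed into the matrix-multiplication cost since $\omega \geq 2$.
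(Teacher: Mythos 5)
Your proof is correct and follows essentially the same route as the paper: a block decomposition with the twist normalization $\automi{-j\sstar}$ applied to the $a$-blocks, reducing the product to $\BigO{\sstar}$ square $\sstar\times\sstar$ matrix multiplications over $\Fqm$ (your matrices $M$ and $N^{(l)}$ are precisely the paper's matrix $A$ and a column-partition of its rectangular matrix $B$, which the paper likewise evaluates as $\sstar$ square products). The only nitpick is that reassembling the result takes $\BigO{(\sstar)^3}=\BigO{s^{3/2}}$ additions rather than $\BigO{s}$, but as you observe this is absorbed into the matrix-multiplication cost since $\omega\geq 2$.
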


\begin{proof}
See \cite{puchinger2015fast}.
The proof uses a fragmentation of $a$ into
\begin{align*}
a^{(i)} = \smallsum_{j=0}^{\sstar-1} a_{i \sstar + j} x^{i \sstar + j}
\end{align*}
and $c^{(i)} := a^{(i)} \amul b$. Then the polynomial multiplication is reduced to matrix multiplication involving the following matrices (cf. \cref{alg:Mul}).
\begin{align}
C &= \begin{bmatrix}
C_{ij}
\end{bmatrix}_{i=0,\dots,\sstar-1}^{j=0,\dots,s+\sstar-1},
\;  C_{ij} = \automi{-i \sstar}(c_j^{(i)}),  \notag \\
A &= \begin{bmatrix}
A_{ij}
\end{bmatrix}_{i=0,\dots,\sstar-1}^{j=0,\dots,\sstar-1},
\quad \; A_{ij} = \automi{-i \sstar}(a_{i \sstar+j}), \label{eq:mul_matrices}\\
B &= \begin{bmatrix}
B_{ij}
\end{bmatrix}_{i=0,\dots,\sstar-1}^{j=0,\dots,s+\sstar-1},
\; B_{ij} = \begin{cases}
\automi{j}(b_{i-j}), &0 \leq i-j \leq s, \\
0, &\text{else}. \hfill \qedhere
\end{cases} \notag 
\end{align}
\end{proof}

\vspace{-0.3cm}

\printalgoIEEE{
\DontPrintSemicolon
\KwIn{$a,b \in \Asetmaxs$}
\KwOut{$c = a \amul b$}
Set up matrices $A$ and $B$ as in~\eqref{eq:mul_matrices} \hfill \tcp{$s^{\frac{3}{2}} \cdot\BigO{1}$} \label{line:Mul_AB}
$C \gets A \cdot B$ \hfill \tcp{$\sstar \cdot \BigO{{\sstar}^\omega}$}  \label{line:Mul_MatrixMul}
Extract the $c^{(i)}$'s from $C$ as in~\eqref{eq:mul_matrices} \hfill \tcp{$s^{\frac{3}{2}} \cdot\BigO{1}$}  \label{line:Mul_C}
\Return{$c \gets \sum_{i=0}^{\sstar-1} c^{(i)}$} \hfill \tcp{$\BigO{s^{\frac{3}{2}}}$}  \label{line:Mul_Add}
\caption{Multiplication}
\label{alg:Mul}
}

If $\autom \in \mathrm{Gal}(\Fqm/\Fq)$, $\autom$ is of the form $\cdot^{[j]}$ for some $j$.
If elements of $\Fqm$ are represented in a normal basis over $\Fq$, then $\autom^i(\alpha) = \alpha^{[i+j]}$ can be computed in $\BigO{1}$ by a cyclic shift of the coefficient vector (cf. \cite{wachter2013decoding}).
Thus, Theorem \ref{thm:multiplication} holds for all $\Fqm[x;\autom]$ and with $\omega \approx 2.376$ it follows that
\begin{align*}
\OMul{s} \in \BigO{s^{1.69}}.
\end{align*}

\subsection{Fast Division}\label{subsec:fast_div}

It was shown in \cite[Section 2.1.2]{caruso2012some} that division in a skew polynomial ring $\Fqm[x;\autom]$ can be reduced to multiplication in $\Fqm[x;\autom^{-1}]$.
Together with \cref{alg:Mul}, we obtain a fast division algorithm for $\Lset \cong \Fqm[x;\cdot^q]$.
Since the multiplication algorithm of Section~\ref{subsec:fast_mul} was so far only known for linearized polynomials, it was not obvious how to combine these results.
We only consider right division in this chapter and the left division works analogously.
To describe the algorithm, we need the following bijective mapping and corresponding lemmas:
\begin{align*}
\tau_s : \Asetmaxs &\to \Asetinvmaxs \\
a = \smallsum_{i=0}^{s} a_i x^i &\mapsto \tau_s(a) = \smallsum_{i=0}^{s} a_{s-i} x^{i}.
\end{align*}

\begin{lemma}[\cite{caruso2012some}]\label{lem:tau_div}
Let $\quo,\remainder \in \Aset$ quotient and remainder of the right division of $a \in \Aset$ by $b \in \Aset$ with $s = \deg a \geq \deg b = \ell$. Then, with $b^{(s-\ell)} := \sum_{i=0}^{\ell} \automi{s-\ell}(b_i) x^i$,
\begin{align*}
\tau_s(a) \equiv \tau_{s-\ell}(\quo) \amul \tau_\ell(b^{(s-\ell)}) \mod x^{s-\ell+1}.
\end{align*}
\end{lemma}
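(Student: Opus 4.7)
The plan is to prove the congruence by comparing coefficients of degree $0$ through $s-\ell$ on both sides, using the multiplication rules in the two skew polynomial rings $\Aset$ and $\Asetinv$.

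First, I would start from the defining equation $a = \quo \amul b + \remainder$ in $\Aset$. Since $\deg \remainder < \ell$, the remainder contributes nothing to coefficients $a_k$ for $k \geq \ell$. Using the multiplication rule $x \amul a = \autom(a) \amul x$ in $\Aset$, the product $\quo \amul b$ has coefficients $(\quo \amul b)_k = \sum_{i+j=k} \quo_i \automi{i}(b_j)$. Substituting $i = (s-\ell) - i'$ and $j = \ell - j'$ and restricting to $k = s-k'$ with $k' \in \{0,\dots,s-\ell\}$ (so that $s-k' \geq \ell$ and $\remainder_{s-k'} = 0$), I would obtain
\begin{align*}
a_{s-k'} \;=\; \smallsum_{i'+j'=k'} \quo_{(s-\ell)-i'}\, \automi{(s-\ell)-i'}\!\big(b_{\ell-j'}\big).
\end{align*}
The left-hand side is exactly $\tau_s(a)_{k'}$.

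Next, I would expand the right-hand side $\tau_{s-\ell}(\quo) \amul \tau_\ell(b^{(s-\ell)})$ using the multiplication rule in $\Asetinv$, which reads $(p \amul q)_{k'} = \sum_{i'+j'=k'} p_{i'}\, \automi{-i'}(q_{j'})$. By the definitions $\tau_{s-\ell}(\quo)_{i'} = \quo_{(s-\ell)-i'}$, $\tau_\ell(b^{(s-\ell)})_{j'} = \automi{s-\ell}(b_{\ell-j'})$, this yields
\begin{align*}
\big(\tau_{s-\ell}(\quo) \amul \tau_\ell(b^{(s-\ell)})\big)_{k'}
&= \smallsum_{i'+j'=k'} \quo_{(s-\ell)-i'}\, \automi{-i'}\!\big(\automi{s-\ell}(b_{\ell-j'})\big) \\
&= \smallsum_{i'+j'=k'} \quo_{(s-\ell)-i'}\, \automi{(s-\ell)-i'}\!\big(b_{\ell-j'}\big),
\end{align*}
where the last equality uses the composition $\automi{-i'}\circ \automi{s-\ell} = \automi{(s-\ell)-i'}$. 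This matches the expression for $\tau_s(a)_{k'}$ derived above for all $k' \in \{0,\dots,s-\ell\}$, which is exactly the meaning of the congruence modulo $x^{s-\ell+1}$.

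The main subtlety I anticipate is bookkeeping for the automorphisms: the shift by $\automi{s-\ell}$ baked into the definition of $b^{(s-\ell)}$ is precisely what is needed to convert the $\automi{+i}$ appearing in the product rule of $\Aset$ into the $\automi{-i'}$ required by the product rule of $\Asetinv$. Once the substitution $i' = (s-\ell)-i$, $j' = \ell - j$ is set up correctly, everything collapses; the routine work is verifying that the index ranges match (i.e.\ that $0 \leq i \leq s-\ell$ and $0 \leq j \leq \ell$ correspond bijectively to $0 \leq i' \leq s-\ell$ and $0 \leq j' \leq \ell$), and that the degree bounds $\deg\tau_{s-\ell}(\quo) \leq s-\ell$, $\deg\tau_\ell(b^{(s-\ell)}) \leq \ell$ make the modular reduction meaningful.
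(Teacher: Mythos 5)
Your proof is correct. The paper itself gives no proof of this lemma---it is imported directly from the cited reference on skew polynomial division---so there is nothing to compare against; your direct coefficient comparison, matching $\tau_s(a)_{k'}=a_{s-k'}$ against the $\Asetinv$-product rule $(p\amul q)_{k'}=\sum_{i'+j'=k'}p_{i'}\automi{-i'}(q_{j'})$ and using the twist $\automi{s-\ell}$ built into $b^{(s-\ell)}$ to cancel the sign discrepancy between the two multiplication rules, is a valid and self-contained verification, with the index bookkeeping handled correctly.
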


\begin{lemma}[\cite{caruso2012some}]\label{lem:hensel_inverse}
The right inverse of $\tau_\ell(b^{(s-\ell)})$ modulo $x^{s-\ell+1}$ exists and can be calculated by Algorithm \ref{alg:hensel_inverse} in $\BigO{\OMulSkew{s} \log s}$ time.
\end{lemma}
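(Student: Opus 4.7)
The plan is to prove the lemma by a Hensel-style Newton iteration adapted to the non-commutative ring $\Asetinv$. Set $h := \tau_\ell(b^{(s-\ell)})$. Its constant coefficient equals $\automi{s-\ell}(b_\ell)$, which is nonzero in $\Fqm$ because $\qdeg b = \ell$ forces $b_\ell \neq 0$ and $\autom$ is a field automorphism. Hence $h$ is already invertible modulo $x$, giving the base case $g_0 := \automi{s-\ell}(b_\ell)^{-1}$. Existence of a right inverse modulo $x^{s-\ell+1}$ will fall out of the iteration itself.

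I would then analyze the doubling iteration
\[
g_{i+1} := g_i \amul (2 - h \amul g_i) \bmod x^{2^{i+1}},
\]
and prove by induction that $h \amul g_i \equiv 1 \bmod x^{2^i}$. Writing $h \amul g_i = 1 + x^{2^i} \amul r$ for some $r \in \Asetinv$, the identity $(1+y)(1-y) = 1 - y^2$, valid in any (possibly non-commutative) ring, gives
\[
h \amul g_{i+1} = (1 + x^{2^i} \amul r)(1 - x^{2^i} \amul r) = 1 - x^{2^i} \amul r \amul x^{2^i} \amul r.
\]
The error term has $x$-valuation at least $2^{i+1}$, because multiplication (on either side) by the monomial $x^{2^i}$ preserves $x$-valuation up to an additive shift by $2^i$, the twist by $\automi{-2^i}$ acting only on the $\Fqm$ coefficients. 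Consequently the precision doubles in each iteration, and after $\lceil \log_2(s-\ell+1)\rceil \in \BigO{\log s}$ iterations we obtain a $g$ with $h \amul g \equiv 1 \bmod x^{s-\ell+1}$.

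For the complexity bound, each iteration performs two truncated skew multiplications in $\Asetinv$ on operands of degree at most $2(s-\ell+1) \in \BigO{s}$, costing $\OMulSkew{s}$ by \cref{thm:multiplication}. Summing over $\BigO{\log s}$ iterations yields the stated running time of $\BigO{\OMulSkew{s}\log s}$. (A finer geometric-sum argument would remove the $\log s$ factor provided $\OMulSkew{\cdot}$ is superlinear, but this is more than what the lemma asks for.)

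The main obstacle I anticipate is simply confirming that the classical Newton-iteration doubling argument really survives the passage to the non-commutative setting. The two points requiring care are (i) fixing the orientation of the update as $g_i \amul (2 - h \amul g_i)$ rather than $(2 - g_i \amul h) \amul g_i$, so that what one controls is indeed a right inverse; and (ii) observing that $x$-valuation is additive under left- and right-multiplication by monomials in $x$ in $\Asetinv$, which is exactly what is needed for the "error squared" term to gain one extra factor of $x^{2^i}$. Once these two observations are in place, the proof is a direct transcription of the commutative case.
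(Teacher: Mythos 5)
Your proof is correct and follows the same route as the source the paper defers to: the paper itself gives no proof of this lemma but cites \cite{caruso2012some}, whose Algorithm~1 (reproduced here as Algorithm~\ref{alg:hensel_inverse}) is exactly the Newton--Hensel doubling iteration you analyze, with the same valuation-doubling induction and the same $\BigO{\log s}$-iterations-times-$\OMulSkew{s}$ accounting. (Your worry (i) is moot, since $g_i \amul (2 - h \amul g_i)$ and $(2 - g_i \amul h) \amul g_i$ both expand to $2g_i - g_i \amul h \amul g_i$; what actually matters, and what your induction correctly tracks, is that the hypothesis $h \amul g_i \equiv 1$ is stated as a right inverse.)
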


\printalgoIEEE{
\DontPrintSemicolon
\KwIn{$c \in \Fqm[x;\autom^{-1}]$ with $c_0 \neq 0$, $k \in \N$.}
\KwOut{$d \in \Fqm[x;\autom^{-1}]$ s.t. $c \amul d \equiv 1 \mod x^k$}
$h_0 \gets 1/c_0$ \hfill \tcp{$\BigO{1}$} \label{line:inv_init}
\For{$i=1,\dots,\lceil \log_2(k) \rceil$}{
$h_{i} \gets 2 h_{i-1}-h_{i-1} \amul c \amul h_{i-1} \mod x^{2^i}$  \hfill \tcp{$\OMul{s^{2^i}}$} \label{line:inv_loop}
}
\Return{$h_{\lceil \log_2(k) \rceil}$} \label{line:inv_return}
\caption{$\ARI{c,k}$ {\cite[Algorithm~1]{caruso2012some}}}
\label{alg:hensel_inverse}
}

The following theorem shows the reduction of the skew polynomial division to skew polynomial multiplication.
\begin{theorem}\label{thm:division-compl}
$\ODivSkew{s} \in \BigO{\OMulSkew{s} \log s}$ using Alg.~\ref{alg:division}.
\end{theorem}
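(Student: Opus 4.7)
The plan is to read off the algorithm and complexity directly from Lemmas \ref{lem:tau_div} and \ref{lem:hensel_inverse}. Given inputs $a, b \in \Aset$ with $s = \deg a \geq \deg b = \ell$, the idea is to recover the quotient $\quo$ by a modular inversion-and-multiplication in the reversed ring $\Asetinv$, and then recover the remainder by one final multiplication and subtraction in $\Aset$.

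Concretely, I would structure Algorithm \ref{alg:division} in four steps. First, build $\tau_s(a)$ and $\tau_\ell(b^{(s-\ell)}) \in \Asetinv$ by a linear pass over the coefficients of $a$ and $b$; this costs $\BigO{s}$, using the assumption that $\automi{i}$ evaluates in $\BigO{1}$. Second, call $h \gets \ARI{\tau_\ell(b^{(s-\ell)}), s-\ell+1}$ to obtain the right inverse modulo $x^{s-\ell+1}$, which by Lemma \ref{lem:hensel_inverse} costs $\BigO{\OMulSkew{s} \log s}$. Third, compute
\begin{align*}
\tilde{\quo} \gets \tau_s(a) \amul h \mod x^{s-\ell+1}
\end{align*}
in $\Asetinv$; this is one multiplication and a truncation, costing $\OMulSkew{s} + \BigO{s}$. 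By Lemma \ref{lem:tau_div}, $\tilde{\quo} = \tau_{s-\ell}(\quo)$ because $\tau_{s-\ell}(\quo)$ has degree at most $s-\ell$ and so is uniquely determined modulo $x^{s-\ell+1}$. Applying $\tau_{s-\ell}^{-1}$ (i.e., reversing coefficients again) gives $\quo$ in $\BigO{s}$. Finally, compute $\remainder \gets a - \quo \amul b$ using one multiplication in $\Aset$ (cost $\OMulSkew{s}$) and a coefficient-wise subtraction ($\BigO{s}$).

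Summing the five contributions, the dominant term is the call to $\ARI$, so the total cost is
\begin{align*}
\ODivSkew{s} \in \OMulSkew{s} \log s + \BigO{\OMulSkew{s}} \subseteq \BigO{\OMulSkew{s} \log s},
\end{align*}
proving the claim. Correctness is immediate from Lemma \ref{lem:tau_div}: the degree bound $\deg \tau_{s-\ell}(\quo) \leq s-\ell$ matches exactly the modulus $x^{s-\ell+1}$ used in the inversion, so no spurious information is lost when we truncate. The remainder satisfies $\deg \remainder < \ell$ automatically by uniqueness of Euclidean division (Lemma \ref{lem:division} carried over to $\Aset$).

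The only subtle point, and the main obstacle to verify, is that the two rings $\Aset$ and $\Asetinv$ are genuinely different non-commutative rings, so one has to be careful that: (i) the Hensel-style doubling in $\ARI$ is run in $\Asetinv$ (which is why Lemma \ref{lem:hensel_inverse} is stated over $\Fqm[x;\autom^{-1}]$ and why the generalized multiplication of Theorem \ref{thm:multiplication} is needed for both automorphisms $\autom$ and $\autom^{-1}$); (ii) the product $\tau_s(a) \amul h$ in step three is the $\Asetinv$-product, not the $\Aset$-product; and (iii) when we move back from $\tau_{s-\ell}(\quo)$ to $\quo$ and multiply $\quo \amul b$, we are back in $\Aset$. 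Once these bookkeeping issues are honored, everything follows from the two cited lemmas and the multiplication complexity from Theorem \ref{thm:multiplication}.
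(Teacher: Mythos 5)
Your proposal is correct and follows essentially the same route as the paper: correctness via Lemma~\ref{lem:tau_div} (recovering $\quo$ by right-multiplying with the modular inverse from Lemma~\ref{lem:hensel_inverse}, then one multiplication for the remainder), with the $\ARI$ call identified as the complexity bottleneck. Your added care about the degree bound $\deg\tau_{s-\ell}(\quo)\leq s-\ell$ matching the modulus $x^{s-\ell+1}$, and about which of the two rings $\Aset$, $\Asetinv$ each product lives in, is exactly the bookkeeping the paper leaves implicit.
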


\begin{proof}
Lemma~\ref{lem:tau_div} implies the correctness.
Line~\ref{line:div_hensel} is the complexity bottleneck and can be accomplished in $\BigO{\OMulSkew{s} \log s}$ according to Lemma~\ref{lem:hensel_inverse}.
\end{proof}

\printalgoIEEE{
\DontPrintSemicolon
\KwIn{$a,b \in \Aset$, $s = \deg a \geq \deg b = \ell$}
\KwOut{$\quo, \remainder \in \Aset$ s.t. $a = \quo \amul b + \remainder$ and $\deg \remainder < \ell$.}
$c \gets \tau_\ell(b^{(s-\ell)})$; $\,$ $\tilde{a} \gets \tau_s(a)$ \hfill \tcp{$\BigO{s}$} \label{line:div_c}
$c^{-1} \gets \ARI{c,s-\ell+1}$ \hfill \tcp{$\!\!\BigO{\OMul{s}\log s}$} \label{line:div_hensel}
$\quo \gets \tau_{s-\ell}^{-1}\big(\tilde{a} \amul c^{-1} \mod x^{\ell-1}\big)$ \hfill \tcp{$\OMul{s}$} \label{line:div_q}
$\remainder \gets a-\quo \amul b$ \hfill \tcp{$\OMul{s}$} \label{line:div_r}
\Return{$[\quo,\remainder]$} \label{line:div_return}
\caption{$\ADIV{a,b}$ {\cite[Algorithm~1]{caruso2012some}}}
\label{alg:division}
}

\begin{table*}[!b]
    \caption{New complexity bounds over $\Fqm$ for operations with linearized polynomials}
    \label{tab:overview_operations_linearized_new}
    \centering
	\begin{tabular}{p{0.09\textwidth} | p{0.21\textwidth} p{0.16\textwidth} p{0.08\textwidth} | p{0.08\textwidth} p{0.20\textwidth}}
		\toprule	
		Operation	& New Complexity (exact) & $\omega \approx 2.376$ & Source	& Before & Source\\
		\midrule
		$\OMul{s}$
		&
		$\BigO{s^{\frac{\omega+1}{2}}}$
		&
		$\BigO{s^{1.69}}$
		&
		\cite{wachter2013decoding}
		&
		$\BigO{s^{1.69}}$
		&
		\cite{wachter2013decoding}
		\\[1ex]
		$\ODiv{s}$
		&
		$\BigO{\OMulSkew{s} \log s}$
		&
		$\BigO{s^{1.69}\log s}$
		&
		\cref{thm:division-compl}
		&
		$\BigO{s^2 \log(s)}$
		&
		\cite{caruso2012some}
		\\[1ex]
		$\OMSP{s}$
		&
		$\BigO{s^{\max\{\log_2(3), \frac{\omega+1}{2}\}} \log^2(s)}$
		&
		$\BigO{s^{1.69} \log^2(s)}$
		&
		\cref{thm:mspmpe-compl}
		&
		$\BigO{s^2}$
		&
		\cite{silva2008rank}
		\\[1ex]
		$\OMPE{s}$
		&
		$\BigO{s^{\max\{\log_2(3), \frac{\omega+1}{2}\}} \log^2(s)}$
		&
		$\BigO{s^{1.69} \log^2(s)}$
		&
		\cref{thm:mspmpe-compl}
		&
		$\BigO{s^2}$
		&
		``naive'' ($s$ ordinary evaluations)
		\\[1ex]
		$\OIP{s}$
		&
		$\BigO{\OMSP{s}}$
		&
		$\BigO{s^{1.69} \log^2(s)}$
		&
		Theorem~\ref{thm:interpol-comp}
		&
		$\BigO{s^3}$
		&
		``naive'' (Lagrange bases \cite{silva2007rank})
		\\
		\bottomrule
    \end{tabular}
\end{table*}

\subsection{Fast Computation of MSP and MPE}\label{subsec:fast_msp}

The fast algorithm for MPE requires a call of the fast algorithm for calculating the MSP and vice versa and therefore, their complexities have to be analyzed jointly.
The following two lemmas show important relations between the MPE and the MSP.

\begin{lemma}[\cite{li2014transform}]\label{lem:MSP_recursion}
Let $U = \{u_1,\dots,u_s\}$ be a basis of a subspace $\U \subseteq \Fqm$, $A,B \subseteq \Fqm$ s.t. $U = A \cup B$. Then,
\begin{align}
\MSP{\U} &= \MSP{\LH{U}} = \MSP{\LH{\MSP{\LH{A}}(B)}} \mul \MSP{\LH{A}} \text{ and} \notag \\
M_{\LH{u_i}} &= \begin{cases}
x^{[0]}, & \text{if } u_i=0, \\
x^{[1]} - u_i^{q-1} x^{[0]}, & \text{else.}
\end{cases} \label{eq:MSP_recursion_base}
\end{align}
\end{lemma}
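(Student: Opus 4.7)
The plan is to establish the recursive identity by exhibiting an explicit candidate polynomial and invoking the uniqueness part of Lemma~\ref{lem:MSP}. I would set $\lambda_A := \MSP{\LH{A}}$ and $\lambda_B := \MSP{\LH{\lambda_A(B)}}$, and argue that $p := \lambda_B \mul \lambda_A$ equals $\MSP{\LH{U}}$ by verifying three properties: $p$ is monic, $\LH{U} \subseteq \ker(p)$, and $\qdeg p = \dim \LH{U}$.

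Monicity is immediate because both factors are monic and $\qdeg$ is additive under $\mul$. For the kernel inclusion, every $u \in A$ gives $p(u) = \lambda_B(\lambda_A(u)) = \lambda_B(0) = 0$, and every $u \in B$ satisfies $\lambda_A(u) \in \LH{\lambda_A(B)} = \ker \lambda_B$, so again $p(u) = 0$. Since the root set of any linearized polynomial is $\Fq$-linear, this extends to all of $\LH{U} = \LH{A \cup B}$.

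The main technical step is the degree count. Since $A$ and $B$ are subsets of the basis $U$, both are $\Fq$-linearly independent, so $\qdeg \lambda_A = \dim \LH{A} = |A|$. The $\Fq$-linear map $\lambda_A$ restricted to $\LH{B}$ has kernel $\LH{A} \cap \LH{B}$, and this intersection equals $\LH{A \cap B}$ precisely because $U$ is linearly independent. Hence $\dim \LH{\lambda_A(B)} = |B| - |A \cap B|$, and summing the two $q$-degrees yields $\qdeg p = |A| + |B| - |A \cap B| = |A \cup B| = s = \dim \LH{U}$. Because any linearized polynomial of $q$-degree $d$ has a root space of dimension at most $d$, the inclusion $\LH{U} \subseteq \ker p$ combined with $\dim \ker p \leq \qdeg p = \dim \LH{U}$ forces $\ker p = \LH{U}$, and the uniqueness statement of Lemma~\ref{lem:MSP} then yields $p = \MSP{\LH{U}}$.

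For the one-dimensional base cases, if $u_i = 0$ the subspace $\LH{u_i}$ is trivial and the unique monic linearized polynomial of $q$-degree zero is $x^{[0]}$. If $u_i \neq 0$, the MSP must be monic of $q$-degree one, i.e.\ of the form $x^{[1]} + c\,x^{[0]}$, and the root condition $u_i^q + c\,u_i = 0$ forces $c = -u_i^{q-1}$. The only subtle ingredient I anticipate is the identity $\LH{A} \cap \LH{B} = \LH{A \cap B}$ used in the dimension count: it depends crucially on $U$ being a basis, so that every subset of $U$ is linearly independent and intersections of their spans coincide with spans of their set intersections.
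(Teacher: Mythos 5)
Your proof is correct and complete. Note that the paper itself offers no argument for this lemma --- it is simply imported by citation from the reference given in the lemma's header --- so there is no in-paper proof to compare against; your write-up supplies the missing details. The route you take (exhibit the candidate $p = \lambda_B \mul \lambda_A$, check monicity, kernel containment, and the degree count, then invoke the uniqueness in Lemma~\ref{lem:MSP} together with the fact that a nonzero linearized polynomial of $q$-degree $d$ has root space of dimension at most $d$) is the standard and, as far as I can see, the intended one. You correctly identified the only delicate point: the lemma does not assume $A$ and $B$ disjoint, so the degree count must go through $\dim \LH{\MSP{\LH{A}}(B)} = |B| - |A \cap B|$ via rank--nullity, and the identity $\LH{A} \cap \LH{B} = \LH{A \cap B}$ does require that $A,B$ are subsets of the linearly independent set $U$ (for arbitrary subsets of $\Fqm$ it fails). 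Your handling of the base cases, including the computation $\ker\bigl(x^{[1]} - u_i^{q-1}x^{[0]}\bigr) = \Fq u_i$, is also correct. No gaps.
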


\begin{lemma}\label{lem:MPE_recursion}
Let $a \in \Lset$ and let $U, A,B \subseteq \Fqm$ where $A,B\subseteq \Fqm$ are disjoint and $U = A \cup B$. 
Let $\remainder_A,\remainder_B$ be the remainders of the right divisions of $a$ by $\MSP{\LH{A}}$ and $\MSP{\LH{B}}$ respectively.
Then, the MSP of $a$ at the set $U$ is
\begin{align*}
a(U) = \remainder_A(A) \cup \remainder_B(B).
\end{align*}
If $U = \{u\}$ and $\qdeg a \leq 1$, $a(U) = \{a(u) = a_1 u^{[1]} + a_0 u^{[0]} \}$.
\end{lemma}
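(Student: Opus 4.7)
The plan is to exploit the composition identity $(p \mul q)(\alpha) = p(q(\alpha))$ for linearized polynomials together with the defining property of the MSP, namely that $\MSP{\LH{A}}$ vanishes on every element of $\LH{A}$ (and in particular on $A$).

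First, apply \cref{lem:division} to write the right division
\begin{align*}
a = \qR \mul \MSP{\LH{A}} + \remainder_A,
\end{align*}
with $\qdeg \remainder_A < \qdeg \MSP{\LH{A}}$. For any $\alpha \in A \subseteq \LH{A}$, \cref{lem:MSP} gives $\MSP{\LH{A}}(\alpha) = 0$. Using the composition identity,
\begin{align*}
a(\alpha) = \qR\!\left(\MSP{\LH{A}}(\alpha)\right) + \remainder_A(\alpha) = \qR(0) + \remainder_A(\alpha) = \remainder_A(\alpha),
\end{align*}
since every linearized polynomial sends $0$ to $0$. Taking $\alpha$ over all of $A$ shows $a(A) = \remainder_A(A)$.

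By the same argument applied to the right division of $a$ by $\MSP{\LH{B}}$, one obtains $a(B) = \remainder_B(B)$. Since $U = A \cup B$ (disjointness is not actually needed for the identity on sets), the evaluation of $a$ on $U$ decomposes as
\begin{align*}
a(U) = a(A) \cup a(B) = \remainder_A(A) \cup \remainder_B(B),
\end{align*}
which proves the main claim. The base case, when $U = \{u\}$ and $\qdeg a \leq 1$, is immediate from the definition of the evaluation map, since $a(u) = a_1 u^{[1]} + a_0 u^{[0]}$.

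There is essentially no obstacle here: the proof is a direct two-line consequence of the evaluation-composition identity and of the vanishing property \cref{lem:MSP} of the MSP on its associated subspace. The only subtlety worth noting is that the lemma is stated in terms of set equalities, so when the evaluation map is not injective on $A$ or $B$ (which can happen if $\remainder_A$ or $\remainder_B$ has nontrivial kernel intersecting these sets) the identity $a(U) = \remainder_A(A) \cup \remainder_B(B)$ still holds as sets because we are taking images under the same map on both sides.
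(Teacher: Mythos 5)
Your proof is correct and is exactly the intended argument: right-divide, use $(p \mul q)(\alpha) = p(q(\alpha))$ together with $\ker \MSP{\LH{A}} \supseteq A$ and $\qR(0)=0$ to get $a(\alpha)=\remainder_A(\alpha)$ pointwise on $A$ (and likewise on $B$). The conference paper defers this proof to the extended version, but the argument given there is the same standard subproduct-tree reduction, so there is nothing to add.
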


This implies the main statement of this subsection.
\begin{theorem}\label{thm:mspmpe-compl}
MSP and MPE can be calculated with Algorithm~\ref{alg:MSP} and \ref{alg:MPE} in complexity
$\OMSP{s}$ and $\OMPE{s} \in \BigO{s^{\max\{\log_2(3), \frac{\omega+1}{2}\}} \log^2(s)} \subseteq \BigO{s^{1.69} \log^2(s)}$.
\end{theorem}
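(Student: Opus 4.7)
Plan: The two algorithms are naturally mutually recursive, so I would set up coupled recurrences and reduce them to a single master-theorem instance. Algorithm~\ref{alg:MSP} is the direct implementation of Lemma~\ref{lem:MSP_recursion}: partition the input set $U$ of size $s$ into halves $A,B$ of size $s/2$, recurse to get $M_A := \MSP{\LH{A}}$, invoke Algorithm~\ref{alg:MPE} to evaluate $M_A$ at the points of $B$, recurse to get $\MSP{\LH{M_A(B)}}$, and multiply by $M_A$. Algorithm~\ref{alg:MPE} applies Lemma~\ref{lem:MPE_recursion} in a subproduct-tree fashion: the subproduct tree of $U$ (a byproduct of an MSP call) is constructed first, and the polynomial is then pushed down the tree by performing two right divisions at every level.

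Denote by $M(s)$ and $E(s)$ the costs of the two algorithms on inputs of size $s$. The descent inside MPE is a geometric sum $\sum_{i} 2^{i+1}\ODiv{s/2^i}$ dominated by its root term, so
\begin{align*}
E(s) \;\leq\; M(s) + O(\ODiv{s}),
\end{align*}
since constructing the subproduct tree is exactly what Algorithm~\ref{alg:MSP} does. The recursion from Lemma~\ref{lem:MSP_recursion} gives
\begin{align*}
M(s) \;\leq\; 2\,M(s/2) + E(s/2) + \OMul{s} + O(s).
\end{align*}
Substituting the first inequality into the second, and using $\ODiv{s} \in O(\OMul{s}\log s)$ from Theorem~\ref{thm:division-compl}, yields the decoupled recurrence
\begin{align*}
M(s) \;\leq\; 3\,M(s/2) + O\!\left(\OMul{s}\log s\right).
\end{align*}
Applying the master theorem with $a=3$, $b=2$, and $f(s)\in O(s^{(\omega+1)/2}\log s)$ then gives $M(s)\in \BigO{s^{\max\{\log_2 3,\,(\omega+1)/2\}}\log^2 s}$: when $(\omega+1)/2 > \log_2 3$ the non-recursive term dominates, and otherwise the recursion depth of $\log s$ multiplies the base exponent $s^{\log_2 3}$ by a polylogarithmic factor. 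Plugging this back into the bound on $E(s)$ yields the same estimate for $\OMPE{s}$.

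The main obstacle is structural rather than algebraic: Algorithm~\ref{alg:MPE} must not invoke Algorithm~\ref{alg:MSP} separately at every recursive step. A naive MPE that rebuilds $M_A$ and $M_B$ at each level would yield $E(s) \leq 2E(s/2) + 2M(s/2) + O(\ODiv{s})$; coupled with the MSP recurrence this produces a combined branching factor of $4$ and collapses the bound to $\Theta(s^2)$. Routing all MSP computations through a single upfront subproduct-tree construction, whose cost is already absorbed into $M(s)$, is precisely what keeps the combined branching factor at $3$ and produces the subquadratic exponent $\log_2 3$.
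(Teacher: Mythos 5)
Your high-level plan --- a branching factor of $3$ fed into the master theorem, giving the exponent $\max\{\log_2 3,\tfrac{\omega+1}{2}\}$ with a $\log^2$ factor --- is the same as the paper's, but the paper keeps the two costs \emph{coupled}, writing $\bigl[\begin{smallmatrix}\OMSP{s}\\ \OMPE{s}\end{smallmatrix}\bigr]=\bigl[\begin{smallmatrix}2&1\\1&2\end{smallmatrix}\bigr]\bigl[\begin{smallmatrix}\OMSP{s/2}\\ \OMPE{s/2}\end{smallmatrix}\bigr]+\bigl[\begin{smallmatrix}\OMul{s}\\ 2\ODiv{s}\end{smallmatrix}\bigr]$ and reading the exponent off the largest eigenvalue $\lambda=3$, whereas you decouple the system via $\OMPE{s}\le \OMSP{s}+\BigO{\ODiv{s}}$. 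That inequality is the gap. You justify it by claiming the subproduct tree used by Algorithm~\ref{alg:MPE} is ``a byproduct of an MSP call,'' and this is false for these algorithms: the divisions in Algorithm~\ref{alg:MPE} need the minimal subspace polynomials of the \emph{original} subsets $B, B_1, B_2,\dots$ at every level, but Lemma~\ref{lem:MSP_recursion} is not the commutative splitting $\MSP{\LH{A\cup B}}=\MSP{\LH{B}}\mul\MSP{\LH{A}}$ --- it recurses on the \emph{image} set $\MSP{\LH{A}}(B)$, so the recursion of Algorithm~\ref{alg:MSP} produces $\MSP{\LH{\MSP{\LH{A}}(B)}}$ and its descendants and never computes $\MSP{\LH{B}}$ itself. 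The tree that Algorithm~\ref{alg:MPE} pushes the polynomial down is therefore not available for free.

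The obvious repair --- building the genuine subproduct tree by independent MSP calls at every node --- costs about $\sum_i 2^i\,\OMSP{s/2^i}\approx C\cdot\OMSP{s}$ with $C=(1-2^{1-\log_2 3})^{-1}=3$, and substituting $\OMPE{s/2}\le C\,\OMSP{s/2}+\cdots$ back into your MSP recurrence yields branching factor $2+C=5$, which destroys the subquadratic exponent; so the decoupling cannot be patched this way. Your side remark about the naive variant is also off: recomputing both MSPs inside MPE gives the coupled matrix $\bigl[\begin{smallmatrix}2&1\\2&2\end{smallmatrix}\bigr]$ with spectral radius $2+\sqrt{2}\approx 3.41$, i.e.\ exponent $\approx 1.77$ --- still subquadratic, not $\Theta(s^2)$; the relevant quantity for a coupled system is the spectral radius, not a row sum or the total number of recursive calls. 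Your instinct that charging only one $\OMSP{s/2}$ to each MPE level is exactly what keeps the eigenvalue at $3$ identifies the delicate point of the theorem correctly, but the proposal does not supply a valid argument for it; the paper itself defers that justification to the extended version \cite{puchinger2015fast}.
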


\begin{proof}
See \cite{puchinger2015fast}.
Correctness follows from Lemma~\ref{lem:MSP_recursion} and \ref{lem:MPE_recursion}.
Complexity-wise, we can prove the system of recursion
\begin{align*}
\begin{bmatrix}
\OMSP{s} \\
\OMPE{s}
\end{bmatrix}
&=
\begin{bmatrix}
2 & 1 \\
1 & 2
\end{bmatrix}
\cdot
\begin{bmatrix}
\OMSP{\tfrac{s}{2}} \\
\OMPE{\tfrac{s}{2}}
\end{bmatrix}
+
\begin{bmatrix}
\OMul{s} \\
2 \cdot \ODiv{s}
\end{bmatrix}.
\end{align*}
Thus, the complexities $\OMSP{s}$ and $\OMPE{s}$ depend on the maximum eigenvalue $\lambda = 3$ of the system's matrix and the complexities $\OMul{s}$ and $\ODiv{s}$, proving the claim.
\end{proof}

\printalgoIEEE{
\DontPrintSemicolon
\KwIn{Basis $U = \{u_1, \dots, u_s\}$ of a subspace $\U \subseteq \Fqm$.}
\KwOut{MSP $\MSP{\langle U \rangle}$.}
\lIf{$s=1$}{\Return{$M_{\langle u_1\rangle}(x)$ according to \eqref{eq:MSP_recursion_base}}} \label{line:MSP_base_case}
\Else{
$A \gets \{u_1,\dots,u_{\gauss{\frac{s}{2}}}\}$, $B \gets \{u_{\gauss{\frac{s}{2}}+1},\dots,u_s\}$ \hfill \tcp{$\BigO{1}$} \label{line:MSP_partition}
$\MSP{\langle A \rangle} \gets \AMSP{A}$ \hfill \tcp{$\OMSP{\tfrac{s}{2}}$}  \label{line:MSP_MSPA}
$\MSP{\langle A \rangle}(B) \gets \AMPE{\MSP{\langle A \rangle}}{B}$  \hfill \tcp{$\OMPE{\tfrac{s}{2}}$}  \label{line:MSP_MPEB}
$\MSP{\MSP{\langle A \rangle}(B) \rangle} \gets \AMSP{\MSP{\langle A \rangle}(B)}$ \hfill \tcp{$\OMSP{\tfrac{s}{2}}$}  \label{line:MSP_MSPB}
\Return{$\MSP{\MSP{\langle A \rangle}(B) \rangle} \mul \MSP{\langle A \rangle}$} \hfill \tcp{$\OMul{s}$}  \label{line:MSP_mul}
}
\caption{$\AMSP{U}$}
\label{alg:MSP}
}

\printalgoIEEE{
\DontPrintSemicolon
\KwIn{$a\in \Lsetmaxs$, $\{u_1, \dots,u_s\} \in \Fqm^s$}
\KwOut{Evaluation of $a$ at all points $u_i$}
\lIf{$s=1$}{\Return{$\{a_1 u_1^{[1]} + a_0 u_1^{[0]}\}$}} \label{line:MPE_base_case}
\Else{
$A\! \gets\!\! \{u_1,\dots,u_{\gauss{\frac{s}{2}}}\}$, $B \!\gets\!\! \{u_{\gauss{\frac{s}{2}}+1},\dots,u_s\}$ \hfill \tcp{$\!\!\!\BigO{1}$}\label{line:MPE_partition}
$\MSP{\langle A \rangle} \gets \AMSP{A}$ \hfill \tcp{$\OMSP{\tfrac{s}{2}}$} \label{line:MPE_MSPA}
$\MSP{\langle B \rangle} \gets \AMSP{B}$ \hfill \tcp{$\OMSP{\tfrac{s}{2}}$} \label{line:MPE_MSPB}
$[\quo_A, \remainder_A] \gets \ARDIV{a, \MSP{\langle A \rangle}}$ \hfill \tcp{$\ODiv{s}$} \label{line:MPE_div_A}
$[\quo_B, \remainder_B] \gets \ARDIV{a, \MSP{\langle B \rangle}}$ \hfill \tcp{$\ODiv{s}$} \label{line:MPE_div_B}
\Return{$\AMPE{\remainder_A}{A} \cup \AMPE{\remainder_B}{B}$} \hfill \tcp{$2 \cdot \OMPE{\tfrac{s}{2}}$} \label{line:MPE_MPE}
}
\caption{$\AMPE{a}{\{u_1, \dots,u_s\}}$}
\label{alg:MPE}
}

\subsection{Fast Interpolation}\label{subsec:fast_interp}

This subsection shows that linearized interpolation can be reduced to calculating MSPs and MPEs and therefore, our fast algorithms from the previous subsection can be applied.

\begin{lemma}\label{lem:interpolation_recursion}
For the interpolation polynomial, it holds that
\begin{align*}
\IP{\{(x_i,y_i)\}_{i=1}^{s}} &= \IP{\{(\tilde{x}_i,y_i)\}_{i=1}^{\gauss{\frac{s}{2}}}} \mul \MSP{\LH{x_{\gauss{\frac{s}{2}}+1}, \dots, x_s}} \\
&+ \IP{\{(\tilde{x}_i,y_i)\}_{i=\gauss{\frac{s}{2}}+1}^{s}} \mul \MSP{\LH{x_1,\dots,x_{\gauss{\frac{s}{2}}}}} \\
\text{with } \tilde{x}_i &:= \begin{cases}
\MSP{\LH{x_{\gauss{\frac{s}{2}}+1}, \dots, x_s}}(x_i), &\text{if } i=1,\dots,\gauss{\frac{s}{2}} \\
\MSP{\LH{x_1,\dots,x_{\gauss{\frac{s}{2}}}}}(x_i), &\text{otherwise}
\end{cases}
\end{align*}
and $\IP{\{(x_i,y_i)\}_{i=1}^{1}} = \frac{y_1}{x_1} x^{[0]}$ (base case $s=1$).
\end{lemma}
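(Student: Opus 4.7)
The plan is to leverage the uniqueness part of \cref{lem:interpolation_existence}: if I can exhibit a linearized polynomial of $q$-degree strictly less than $s$ that sends each $x_i$ to $y_i$, it must equal $\IP{\{(x_i,y_i)\}_{i=1}^s}$. So my task reduces to checking that the claimed right-hand side is well-defined, has the right degree, and satisfies the interpolation conditions. Let me abbreviate $L = \{1,\dots,\gauss{s/2}\}$, $R = \{\gauss{s/2}+1,\dots,s\}$, $M_L = \MSP{\LH{x_i:i\in L}}$, $M_R = \MSP{\LH{x_i:i\in R}}$, and write $p_L,p_R$ for the two inner interpolation polynomials.

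The main obstacle I anticipate is not the evaluation check but rather justifying that $p_L$ and $p_R$ exist, i.e.\ that the point sets $\{\tilde{x}_i:i\in L\}$ and $\{\tilde{x}_i:i\in R\}$ are $\Fq$-linearly independent so that \cref{lem:interpolation_existence} is applicable recursively. I plan to deduce this from the global linear independence of $\{x_1,\dots,x_s\}$, which implies $\LH{x_i:i\in L}\cap\LH{x_i:i\in R}=\{0\}$. By \cref{lem:MSP}, $\ker M_R = \LH{x_i:i\in R}$, so $M_R$ restricts to an injective $\Fq$-linear map on $\LH{x_i:i\in L}$; hence $\{M_R(x_i):i\in L\}=\{\tilde{x}_i:i\in L\}$ is linearly independent. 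The argument for $R$ is symmetric.

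With well-definedness in hand, the degree bound is routine: \cref{lem:MSP} gives $\qdeg M_L=|L|$ and $\qdeg M_R=|R|$, while \cref{lem:interpolation_existence} gives $\qdeg p_L<|L|$ and $\qdeg p_R<|R|$, so each of $p_L\mul M_R$ and $p_R\mul M_L$ has $q$-degree less than $|L|+|R|=s$. The evaluation check then reduces to the composition identity $(a\mul b)(\alpha)=a(b(\alpha))$: for $i\in L$, the summand $p_R\mul M_L$ vanishes at $x_i$ since $M_L(x_i)=0$, while $(p_L\mul M_R)(x_i)=p_L(\tilde{x}_i)=y_i$ by the defining property of $p_L$; the case $i\in R$ is symmetric. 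Uniqueness in \cref{lem:interpolation_existence} then forces the right-hand side to coincide with $\IP{\{(x_i,y_i)\}_{i=1}^s}$.

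The base case $s=1$ needs only the observation that linear independence forces $x_1\neq 0$, so $\tfrac{y_1}{x_1}x^{[0]}$ is a well-defined element of $\Lsmallers$ of $q$-degree $0$ sending $x_1$ to $y_1$, and is therefore the required interpolation polynomial by uniqueness.
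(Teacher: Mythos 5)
Your proposal is correct and follows essentially the same route as the paper: verify that the right-hand side has $q$-degree less than $s$ and evaluates to $y_i$ at every $x_i$, then invoke the uniqueness from Lemma~\ref{lem:interpolation_existence}. The extra care you take with well-definedness (injectivity of $\MSP{\LH{x_i : i \in R}}$ on $\LH{x_i : i \in L}$) is a detail the paper's sketch omits but is consistent with its argument.
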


\begin{proof}
See \cite{puchinger2015fast}. The idea is to evaluate $\IP{\{(x_i,y_i)\}_{i=1}^{s}}$ at all positions $x_i$ and show that the definition holds.
\end{proof}

\begin{theorem}\label{thm:interpol-comp}
$\OIP{s} \in \BigO{\OMSP{s}}$ using \cref{alg:IP}.
\end{theorem}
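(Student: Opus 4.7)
The plan is to derive Algorithm~\ref{alg:IP} directly from the recursive identity in Lemma~\ref{lem:interpolation_recursion} as a divide-and-conquer procedure: given point--value pairs $\{(x_i,y_i)\}_{i=1}^{s}$, split the index set at $\lfloor s/2 \rfloor$ into halves $A$ and $B$, compute the two MSPs $M_A := \MSP{\LH{x_1,\dots,x_{\lfloor s/2 \rfloor}}}$ and $M_B := \MSP{\LH{x_{\lfloor s/2 \rfloor+1},\dots,x_s}}$, apply them by MPE to obtain the transformed abscissae $\tilde x_i = M_B(x_i)$ for $i \in A$ and $\tilde x_i = M_A(x_i)$ for $i \in B$, recursively solve the two half-sized interpolation problems on the transformed data, multiply each resulting polynomial by the appropriate MSP and add. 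The base case $s=1$ is the constant $\frac{y_1}{x_1}\,x^{[0]}$.

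Before invoking the recursion I would check that the two subproblems are well-posed. Because the original $x_i$ are $\Fq$-linearly independent, $\LH{A}\cap\LH{B}=\{0\}$, so $M_B$ (whose kernel equals $\LH{B}$) restricts to an injective $\Fq$-linear map on $\LH{A}$ and therefore sends the $\lfloor s/2 \rfloor$ elements of $A$ to linearly independent images; symmetrically for $M_A$ on $B$. Hence Lemma~\ref{lem:interpolation_existence} applies at each recursive level, and correctness of the whole procedure follows by induction on $s$ directly from Lemma~\ref{lem:interpolation_recursion}.

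For the complexity analysis, I would read off the recurrence
\begin{align*}
\OIP{s} \;\leq\; 2\,\OIP{\tfrac{s}{2}} \;+\; 2\,\OMSP{\tfrac{s}{2}} \;+\; 2\,\OMPE{\tfrac{s}{2}} \;+\; 2\,\OMul{s} \;+\; \BigO{s},
\end{align*}
where each level contributes only two MSP and two MPE calls (since each half needs only the MSP of the \emph{other} half), two multiplications of polynomials of $q$-degree at most $s$, and linear overhead for forming the transformed point sets and final addition. By Theorem~\ref{thm:mspmpe-compl} we have $\OMSP{s},\,\OMPE{s}\in\BigO{s^{c}\log^{2}s}$ with $c:=\max\{\log_2 3,\, (\omega+1)/2\}>1$, and $\OMul{s}\in\BigO{s^{(\omega+1)/2}}\subseteq\BigO{\OMSP{s}}$. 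The Master theorem with $a=b=2$ and $\log_b a = 1 < c$ thus places the recursion in the non-recursive-dominated regime, yielding $\OIP{s}\in\BigO{\OMSP{s}}$.

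No deep obstacle is expected: the algebraic heavy lifting is already isolated in Lemma~\ref{lem:interpolation_recursion} and the complexity estimate is a standard Master-theorem argument. The only bookkeeping care needed is ensuring that exactly two MSPs (rather than four) are computed per level by reusing $M_A,M_B$ both for the change of variables and for the final recombination step, which keeps the constant in the recurrence small enough that the $\OMSP{\tfrac{s}{2}}$ and $\OMPE{\tfrac{s}{2}}$ terms fit cleanly under the Master theorem and collapse to $\BigO{\OMSP{s}}$.
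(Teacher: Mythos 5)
Your proposal is correct and follows essentially the same route as the paper: correctness via Lemma~\ref{lem:interpolation_recursion} by induction, and the recurrence $\OIP{s} = 2\,\OIP{\tfrac{s}{2}} + \BigO{\OMSP{s}}$ resolved by the Master theorem (your more itemized recurrence collapses to this since $\OMPE{s}$ shares the bound of $\OMSP{s}$ and $\OMul{s} \subseteq \BigO{\OMSP{s}}$). The only addition is your explicit check that the transformed abscissae remain linearly independent, which the paper leaves implicit (it is part of the proof of the lemma in the extended version) but is a worthwhile detail to verify.
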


\begin{proof}
Correctness follows from \cref{lem:interpolation_recursion}.
The complexity is $\OIP{s} = 2 \cdot \OIP{\tfrac{s}{2}} + \BigO{\OMSP{s}}$, which is resolved using the master theorem, implying the claim.
\end{proof}

\printalgoIEEE{
\DontPrintSemicolon
\KwIn{$(x_1,y_1),\dots,(x_s,y_s) \in \Fqm^2$, $x_i \neq 0$ distinct}
\KwOut{Interpolation polynomial $\IP{\{(x_i,y_i)\}_{i=1}^{s}}$}
\lIf{$s=1$}{\Return{$\{\frac{y_1}{x_1} x^{[0]}\}$}} 
\Else{
$A \gets \{x_1,\dots,x_{\gauss{\frac{s}{2}}}\}$, $B \gets \{x_{\gauss{\frac{s}{2}}+1},\dots,x_s\}$ \hfill \tcp{$\BigO{1}$}\label{line:IP_partition}
$\MSP{\langle A \rangle} \gets \AMSP{A}$ \hfill \tcp{$\OMSP{\tfrac{s}{2}}$} \label{line:IP_MSPA}
$\MSP{\langle B \rangle} \gets \AMSP{B}$ \hfill \tcp{$\OMSP{\tfrac{s}{2}}$} \label{line:IP_MSPB}
$\{\tilde{x}_1,\dots,\tilde{x}_{\gauss{\frac{s}{2}}}\} \gets \AMPE{\MSP{\langle B \rangle}}{A}$ \hfill \tcp{$\OMPE{\tfrac{s}{2}}$} \label{line:IP_x_tilde_1}
$\{\tilde{x}_{\gauss{\frac{s}{2}}+1},\dots,\tilde{x}_s\} \gets \AMPE{\MSP{\langle A \rangle}}{B}$ \hfill \tcp{$\OMPE{\tfrac{s}{2}}$} \label{line:IP_x_tilde_2}
$\IP{1} \gets \AIP{\{(\tilde{x}_i,y_i)\}_{i=1}^{\gauss{\frac{s}{2}}}}$ \hfill \tcp{$\OIP{\tfrac{s}{2}}$} \label{line:IP_IP1}
$\IP{2} \gets \AIP{\{(\tilde{x}_i,y_i)\}_{i=\gauss{\frac{s}{2}}+1}^{s}}$ \hfill \tcp{$\OIP{\tfrac{s}{2}}$} \label{line:IP_IP2}
\Return{$\IP{1} \mul \MSP{\langle B \rangle} + \IP{2} \mul \MSP{\langle A \rangle}$} \hfill \tcp{$2 \cdot \OMul{\tfrac{s}{2}}$} \label{line:IP_mul}
}
\caption{$\AIP{\{(x_i,y_i)\}_{i=1}^{s}}$}
\label{alg:IP}
}

\subsection{Comparsion to Other Fast Algorithms}\label{subsec:comparison}

In~\cite{SilvaKschischang-FastEncodingDecodingGabidulin-2009} and \cite{WachterAfanSido-FastDecGabidulin_DCC}, several operations with linearized polynomials $\Lset$ with degree $\leq m$ were reduced to complexity $\BigO{m^3}$ in operations in $\Fq$.
It is shown in \cite{couveignes2009elliptic} that for any field extension $\Fqm/\Fq$, there is a representation of $\Fqm$ elements over $\Fq$ such that the operations addition, multiplication and Frobenius powering $\cdot^{q}$ with $\Fqm$ elements cost
\begin{align*}
\BigO{m \log^3(m) \log(\log(m))^3}
\end{align*}
operations in $\Fq$.
Hence, our algorithms have complexity
\begin{align*}
\BigO{m^{2.69} \log^5(m) \log(\log(m))^3}
\end{align*}
over $\Fq$ and improve the results of \cite{SilvaKschischang-FastEncodingDecodingGabidulin-2009} and \cite{WachterAfanSido-FastDecGabidulin_DCC}.

\section{Main Statement}

By combining our analysis of the error and erasure decoding algorithm for Gabidulin codes in Section~\ref{sec:decoding} with the fast operations presented in Section~\ref{sec:fast_algos}, which are summarized in Table~\ref{tab:overview_operations_linearized_new}, we obtain the following main statement of the paper.
\begin{theorem}
Error and erasure decoding with a Gabidulin code $\Gab{n,k}$ has complexity
\begin{align*}
\BigO{n^{1.69} \log^2(n)} \text{ in } \Fqm.
\end{align*}
\end{theorem}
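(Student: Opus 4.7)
The plan is to combine the structural analysis of error-and-erasure decoding from \cref{sec:decoding} with the fast operation complexities summarized in \cref{tab:overview_operations_linearized_new}. First I would invoke the degree lemma from \cref{sec:decoding}, which guarantees that every linearized polynomial appearing in \cref{alg:gaoalgo_ee} has $q$-degree in $\BigO{n}$. This allows me to instantiate each complexity symbol in \cref{tab:operations} with the argument $s = n$, reducing the proof to a line-by-line accounting of \cref{alg:gaoalgo_ee}.

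Next, I would sum the per-line costs as annotated in the algorithm. Line~\ref{line:gao_ee_a} is negligible. Line~\ref{line:gao_ee_d} contributes $2\cdot\OMSP{n} + \BigO{n}$ for the two MSP computations and the coefficient-wise construction of $\gCtilde$. Line~\ref{line:gao_ee_f} contributes $\OIP{n} + \ODiv{n}$ for the interpolation and the reduction modulo $x^{[m]}-x^{[0]}$. Line~\ref{line:gao_ee_g} contributes $\BigO{\ODiv{n} \log n}$, since the LEEA with operands in $\Lsetmaxs$ performs $\BigO{\log s}$ divisions, as recalled in \cref{sec:decoding}. Finally, line~\ref{line:gao_ee_i} contributes $2\cdot\ODiv{n}$ for the two remaining divisions.

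Substituting the bounds $\OMul{s}\in\BigO{s^{1.69}}$, $\ODiv{s}\in\BigO{s^{1.69}\log s}$, and $\OMSP{s},\OIP{s}\in\BigO{s^{1.69}\log^2 s}$ from \cref{tab:overview_operations_linearized_new} with $s = n$, each summand above is in $\BigO{n^{1.69}\log^2 n}$. Adding the finitely many contributions preserves this bound and yields the claim.

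I do not anticipate a genuine obstacle at this stage, since all conceptual difficulty has already been absorbed into \cref{sec:fast_algos}, where the sub-quadratic algorithms for multiplication, division, MSP, MPE, and interpolation were established. The only point requiring a quick check is that the $\log n$ factor introduced by the LEEA composes with the $\log n$ hidden inside $\ODiv{n}$ to produce an overall polylogarithmic factor of $\log^2 n$, which matches the polylogarithmic overhead already present in $\OMSP{n}$ and $\OIP{n}$ and therefore does not worsen the exponent.
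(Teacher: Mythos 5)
Your proposal is correct and follows essentially the same route as the paper, which likewise obtains the theorem by combining the degree bound $\qdeg \in \BigO{n}$ for all polynomials in \cref{alg:gaoalgo_ee} with the per-line operation counts of \cref{tab:operations} and the sub-quadratic bounds of \cref{tab:overview_operations_linearized_new}. Your closing observation that the LEEA's $\log n$ divisions combine with the $\log n$ inside $\ODiv{n}$ to give $\BigO{n^{1.69}\log^2 n}$, matching the cost of $\OMSP{n}$ and $\OIP{n}$, is exactly the accounting the paper relies on.
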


Note that encoding $\Lset^{<k} \to \Fqm^n, f \mapsto (f(g_1),\dots,f(g_n))$ of Gabidulin codes is a multi-point evaluation and can also be accomplished in $\BigO{n^{1.69} \log^2(n)}$ time.

For future work, it is interesting to include our new algorithms in the study from~\cite{BohaczukSilva-EvaluationErasureDecodingGabidulin} on fast erasure decoding of Gabidulin codes and generalize the results to skew polynomials over arbitrary fields.

\section*{Acknowledgement}
The authors would like to thank Johan~S.~R.~Nielsen for the valuable discussions and Luca~De~Feo for pointing us at \cite{couveignes2009elliptic}.

\bibliographystyle{IEEEtran}
\bibliography{main}

\end{document}